\providecommand{\norm}[1]{\lVert#1\rVert}
\DeclareMathOperator*{\argmin}{arg\,min}
\newcounter{myalgctr}
\newenvironment{rem}{
   \vskip1mm\indent
   \refstepcounter{myalgctr}
   \textbf{Remark \themyalgctr}
   }{\hfill$\diamond$\par}  
\numberwithin{myalgctr}{section}
\numberwithin{equation}{section}
\newtheorem{defi}{Definition}[section]
\newtheorem{thm}{Theorem}[section]
\newtheorem{conj}{Conjecture}[section]
\newcommand{\bl}{{\bar{\lambda}}}
\begin{document}
\title{Statistical Inference based on Bridge Divergences}
\author[1]{Arun Kumar Kuchibhotla\thanks{arunku@upenn.edu}}
\author[1]{Somabha Mukherjee\thanks{somabha@upenn.edu}}
\author[2]{Ayanendranath Basu\thanks{ayanbasu@isical.ac.in}}
\affil[1]{Department of Statistics, University of Pennsylvania}
\affil[2]{Interdisciplinary Statistical Research Unit, Indian Statistical Institute}


\date{}
\maketitle
\begin{abstract}
$M$-estimators offer simple robust alternatives to the maximum likelihood estimator. Much of the robustness literature, however, has focused on the problems of location, location-scale and regression estimation rather than on estimation of general parameters. The density power divergence (DPD) and the logarithmic density power divergence (LDPD) measures provide two classes of competitive $M$-estimators (obtained from divergences) in general parametric models which contain the MLE as a special case. In each of these families, the robustness of the estimator is achieved through a density power down-weighting of outlying observations. Both the families have proved to be very useful tools in the area of robust inference. However, the relation and hierarchy between the minimum distance estimators of the two families are yet to be comprehensively studied or fully established. Given a particular set of real data, how does one choose an optimal member from the union of these two classes of divergences? In this paper, we present a generalized family of divergences incorporating the above two classes; this family provides a smooth bridge between the DPD and the LDPD measures. This family helps to clarify and settle several longstanding issues in the relation between the important families of DPD and LDPD, apart from being an important tool in different areas of statistical inference in its own right.
\end{abstract}
\section{Introduction}\label{sec:intro}
Statistical procedures based on minimization of divergences are popular in the literature. In our context, a divergence is a distance like dissimilarity measure between two distributions which does not necessarily demand metric properties. Density based minimum divergence methods are very useful as they combine high efficiency with strong robustness properties. \cite{bhhj98} proposed the class of density power divergences (DPD). These divergences, when constructed between an arbitrary density $g$ and a parametric model density $f_{\theta}$ are indexed by a non-negative robustness tuning parameter $\alpha$ and have the form
\begin{equation}\label{type1}
\rho^{(\alpha)}_1(g,f_{\theta})=\int \left\{f_{\theta}^{1+\alpha}-\left(1+\frac{1}{\alpha}\right)gf_{\theta}^{\alpha} + \frac{1}{\alpha}g^{1+\alpha}\right\}.
\end{equation}
The divergence at $\alpha = 0$ is defined as the continuous limit of the expression in Equation \eqref{type1} as $\alpha\to0$. This generates the divergence
\begin{equation}\label{eq:KLDiver}
\rho^{(0)}_1(g,f_{\theta})=\int g\log\left(\frac{g}{f_{\theta}}\right),
\end{equation}
where $\log$ represents natural logarithm. Observe that the only term on the right hand side of Equation \eqref{type1} containing both $g$ and $f_{\theta}$ has $g$ in degree one. A divergence with such a property has been referred to as a decomposable divergence by some authors; see, for example, \cite{bt12}. 

Another class of divergences with some similar properties was introduced by \cite{jhhb01} and was followed up by \cite{fe08}, \cite{bt12} and \cite{fuji13} among others. In spite of its formal similarity with the DPD, this family, referred to herein as the logarithmic density power divergence (LDPD) family, was originally developed following the robust model fitting idea of \cite{wind95}. This class also has a non-negative robustness tuning parameter $\alpha$ and is defined as
\begin{equation}\label{type0}
\rho^{(\alpha)}_0(g,f_{\theta})=\log\int f_{\theta}^{1+\alpha}-\left(1+\frac{1}{\alpha}\right)\log\int gf_{\theta}^{\alpha}+\frac{1}{\alpha}\log\int g^{1+\alpha}.
\end{equation}
The divergence $\rho_0^{(0)}$ is once again defined as a limiting case and some simple algebra shows that $\rho_0^{(0)}(g,f_{\theta}) = \rho_1^{(0)}(g,f_{\theta})$, the common divergence being a version of the Kullback-Leibler divergence.  

\cite{jhhb01} also provided a general form of divergence measures in terms of two tuning parameters $\alpha$ and $\phi$ given by
\begin{equation}\label{typephi}
\rho^{(\alpha)}_{\phi}(g,f_{\theta}) = \frac{1}{\phi}\left(\int f_{\theta}^{1+\alpha}\right)^{\phi}-\frac{1}{\phi}\left(1+\frac{1}{\alpha}\right)\left(\int gf_{\theta}^{\alpha}\right)^{\phi} + \frac{1}{\alpha\phi}\left(\int g^{1+\alpha}\right)^{\phi},
\end{equation}
for $0\le\phi\le1$ and $\alpha \ge 0$. The DPD and the LDPD measures are recovered from this general form for $\phi = 1$ and $\phi = 0$ , the second one being obtained as the limiting case as $\phi \rightarrow 0$. Accordingly, the DPD and the LDPD families have also been referred to as type 1 and type 0 divergences in the literature. A scaled version of the LDPD has also been called the $\gamma$-divergence by \cite{fe08}. Both the DPD and the LDPD measures have proved to be useful additions to the literature of robust parameter estimation based on divergences. Both families, as well as the corresponding minimum divergence estimators, have been heavily cited in the literature and applied to many practical problems, and a thorough exploration of their relationship and hierarchy can help us develop better compromises. In either case, a simple density power down-weighting indicates the source of robustness of the resulting estimator. In both cases, the parameter $\alpha$ controls the trade-off between robustness and efficiency; smaller values of $\alpha$ lead to greater model efficiency and larger values of $\alpha$ lead to greater outlier stability. The minimum divergence estimator corresponding to $\alpha = 0$ is the maximum likelihood estimator in either case.

In terms of comparison between the minimum DPD and the minimum LDPD estimators, \cite{jhhb01} expressed a (weak) preference for the former. In particular, they observed that the presence of observations very close to zero could lead to spurious global minimum in case of the LDPD under the exponential model. On the other hand, \cite{fe08} and \cite{fuji13} have claimed that the minimum LDPD estimators exhibit a greater relative stability under heavy contamination leading to smaller bias and smaller mean squared error. They argued that the small bias that the minimum LDPD estimator has is due to an approximate Pythagorean relation that the LDPD satisfies. In contrast, \cite{bt12}, based on their own simulation study, do not report any particular relative advantage for the minimum LDPD estimator. 

These points raise certain unsettled issues involving these two useful classes of divergences and corresponding estimators which we hope to at least partially reconcile in the present paper. The generalized family of divergences in Equation \eqref{typephi}, useful as it is, does not generate minimum divergence estimators that are legitimate $M$-estimators except when $\phi=0$ or $1$. We will construct an alternative generalized class of divergences providing a bridge between these classes where several of the intermediate divergences lead to reasonable compromises between the positives of these two families. This new family will be called the family of \emph{bridge density power divergences} and the generated estimators are \underline{all} $M$-estimators. This will provide the user with the flexibility of choosing a suitable estimator from a larger class.

The new family of divergences will depend on two tuning parameters which are (i) the robustness parameter $\alpha$ and (ii) the bridge parameter $\lambda$. It would be of interest to choose the tuning parameters adaptively with respect to the proportion of outliers so that the estimation is optimal in an appropriate sense. The robustness tuning parameter $\alpha$ should be close to zero for pure data and should assume a moderately large positive value in case of contaminated data. In this respect, \cite{HK01} proposed the first method of choosing the tuning parameter by minimizing an estimate of the asymptotic variance and \cite{wj05} refined this process by using the mean squared error criterion together with a pilot estimate. We provide some justification for using a modified \cite{HK01} procedure; see Section \ref{tunpar}.

Before concluding this section, we summarize the main achievements of this paper.
\begin{enumerate}
\item We introduce a new family of divergences, the \emph{Bridge Density Power Divergences} (BDPD) which produce a smooth link between the DPD and the LDPD; each intermediate divergence is decomposable and leads to an $M$-estimator. Apart from helping to understand the relations between the DPD and the LDPD, this family is important in its own right, and in specific cases the performance of the intermediate divergences turn out to be superior than both the two marginal divergences (DPD and LDPD).
\item We demonstrate that the spurious root problem observed in case of the LDPD as noticed by \cite{jhhb01} is not an isolated problem for the exponential model, and is a more general phenomenon.
\item It is shown that the results and assertions of \cite{fe08} and \cite{fuji13} are substantially correct, but only when a number of residual concerns are answered. In particular, the observed superior performance of the ``minimum LDPD estimator", is, most of the time, achieved at a local minimum of the LDPD measure (rather than a global one), and has to be viewed as a weighted method of moments estimator rather than a minimum divergence estimator.
\item We demonstrate that the weighted method of moment equations corresponding to the LDPD (and indeed many other BDPD measures) can potentially throw up multiple roots; this is a real problem with practical implications. As the desired root is not necessarily the minimizer of the corresponding divergence, there is no automatic root selection strategy. As a choice of the incorrect solution can be disastrous, it is imperative that a clear answer to this question is provided in the literature, where this issue is so far unexplored.
\item We clearly define the target parameter when using the LDPD for parametric estimation. Our results in this paper show that it cannot, in general, be the global minimizer of the LDPD, nor any arbitrary root of the weighted method of methods equation. Our description of the target depends on an actual numerical algorithm for its selection (see item 6 below). This interpretation of the target parameter actually extends to the entire BDPD family, except the ordinary DPD. 
\item Finally, we provide an algorithm for the selection of the suitable root of the weighted method of moment equations for any member of the BDPD family, including the LDPD. We also provide a method for the selection of ``optimal'' tuning parameters for analyzing the real data. 
\end{enumerate}

The rest of the paper is organized as follows. In Section \ref{div}, we introduce the basic parametric set up and also provide the construction leading to the family of bridge divergences. Various properties of the estimators including strong consistency, asymptotic normality and robustness are discussed in Section \ref{prop}. In Section \ref{type0s}, we provide a heuristic explanation for the spurious minimum behaviour of the LDPD measures in the case of small outliers. Section \ref{sec:Unbounded} rigorously proves that the LDPD is bound to fail in some specific examples. In Section \ref{sec:Chain_Algorithm}, we address issues regarding multiple roots of the LDPD estimating equation and well-definedness of the estimators as discussed in the existing literature. We also introduce an algorithm for getting hold of a good root of the LDPD and all the other BDPD estimating equations. In Section \ref{sec:Simulations}, we provide the results of a simulation study conducted to support the claims made in previous sections. In Section \ref{tunpar}, we give some directions on how to choose the tuning parameters. Finally we conclude with a few remarks in Section \ref{con}.
\section{The Bridge Density Power Divergence (BDPD) Family}\label{div}
The problem of parameter estimation considered in this paper uses the following set up and notation. Let $\mathcal{G}$ denote the set of all probability distributions having densities with respect to some $\sigma$-finite base measure $\mu$. We assume that the data generating distribution $G$ and the model family $\mathcal{F}_{\Theta}=\{F_{\theta}:\theta\in\Theta\subset\mathbb{R}^p\}$ belong to $\mathcal{G}$. Let $g$ and $f_{\theta}$ be the corresponding densities (with respect to $\mu$). Let $X_1,X_2,\ldots,X_n$ be an independent and identically distributed (iid) random sample from $G$ which is to be modeled by the family $\mathcal{F}_{\Theta}$. Our aim is to estimate the parameter $\theta$ by choosing the model density which gives the ``closest fit" to the data. In this paper, we quantify the ``closeness" using the density power divergences, the logarithmic density power divergences or their generalizations as mentioned in the previous section. However we will see in the latter sections that this idea of closeness will require a refinement for most of the minimum BDPD estimators. All the integrals considered in this paper including those in the previous section are with respect to the measure $\mu.$

In order to estimate $\theta$ based on an iid sample, one needs to construct an empirical estimate of the divergence. In this regard, note that the first terms of Equations \eqref{type1} and \eqref{type0}, depending only on the model density $f_{\theta}$, need no estimation. The third terms are independent of $\theta$ and so do not figure in any minimization process over $\theta$. For the second terms, note that the integral involved can be written as $\mathbb{E}_gf_{\theta}^{\alpha}(X)$ and so can be estimated by the average of $f_{\theta}^{\alpha}(X_i),\, 1\le i\le n$.
This is a consequence of the decomposability property referred to in Section \ref{sec:intro}. In light of this discussion, the parameter estimates based on the divergences $\rho^{(\alpha)}_1$ and $\rho^{(\alpha)}_0$ are given by
\begin{equation}\label{eq:MDPDE}
\hat{\theta}^{\alpha}_{n1} := \argmin_{\theta\in\Theta}\left[ \int f_{\theta}^{1+\alpha} - \left(1+\frac{1}{\alpha}\right)\frac{1}{n}\sum_{i=1}^nf_{\theta}^{\alpha}(X_i)\right],
\end{equation}
and
\begin{equation}\label{eq:LDPDSampleObj}
\hat{\theta}^{\alpha}_{n0} := \argmin_{\theta\in\Theta} \left[\log\left(\int f_{\theta}^{1+\alpha}\right) - \left(1+\frac{1}{\alpha}\right)\log\left(\frac{1}{n}\sum_{i=1}^nf_{\theta}^{\alpha}(X_i)\right)\right].
\end{equation}
Under certain regularity conditions allowing the interchange of the derivative and the integral, the estimating equations corresponding to the above divergences are given, respectively, by
\begin{equation}\label{eq1}
\int f_{\theta}^{1+\alpha}u_{\theta} = \frac{1}{n}\sum_{i=1}^n f^{\alpha}_{\theta}(X_i)u_{\theta}(X_i),
\end{equation}
and
\begin{equation}\label{eq2} 
\left(\frac{1}{n}\sum_{i=1}^n f_{\theta}^{\alpha}(X_i)\right)\int f_{\theta}^{1+\alpha}u_{\theta} = \left(\frac{1}{n}\sum_{i=1}^n f_{\theta}^{\alpha}(X_i)u_{\theta}(X_i)\right)\int f_{\theta}^{1+\alpha}. 
\end{equation}
Here $u_{\theta}(x)$ represents the likelihood score function given by $u_{\theta}(x) = \nabla \log f_{\theta}(x)$. Throughout this manuscript, we use the symbols $\nabla, \nabla_2$ to denote the first and the second derivatives with respect to $\theta$. Equations \eqref{eq1} and \eqref{eq2} demonstrate that both the minimum DPD and the minimum LDPD estimators are legitimate $M$-estimators. Later in this section and in Section \ref{sec:Chain_Algorithm}, we will consider the weighted method of moments representation of these estimating equations.

In this paper, we will show that the DPD and the LDPD families can be combined in a larger super-family of divergences where each intermediate divergence class leads to an $M$-estimator unlike the generalization given in Equation \eqref{typephi}. For notational simplicity in the following derivation, define 
\[
t_1(\theta):=\int f_{\theta}^{1+\alpha},\quad t_2(\theta):=\int gf_{\theta}^{\alpha},\quad t_3(\theta):=\int g^{1+\alpha}.
\]

We consider the density versions (in terms of the true density $g$) of Equations \eqref{eq1} and \eqref{eq2} rather than the versions based on the empiricals. These equations are given by
\begin{equation}\label{seq1}
\int f_{\theta}^{1+\alpha}u_{\theta} = \int f_{\theta}^{\alpha}gu_{\theta},
\end{equation}
and
\begin{equation}\label{seq2}
\int f_{\theta}^{\alpha}g\int f_{\theta}^{\alpha}u_{\theta} = \int f_{\theta}^{\alpha}gu_{\theta}\int f_{\theta}^{\alpha}.
\end{equation}
Rewriting the estimating equations \eqref{seq1} and \eqref{seq2} in terms of $t_1$ and $t_2$, we get
\begin{align}
\left[\frac{\nabla t_1(\theta)}{\alpha+1}-\frac{\nabla t_2(\theta)}{\alpha}\right]&=0,\label{eq3}\\
\left[\frac{t_2(\theta)\nabla t_1(\theta)}{\alpha+1}-\frac{t_1(\theta)\nabla t_2(\theta)}{\alpha}\right]&=0\label{eq4}.
\end{align}
For $\lambda\in[0,1]$, consider an estimating equation which equates a convex combination of the above two estimating functions to zero; it is given explicitly by
\begin{equation}\label{besteq}
\lambda\left[\frac{t'_1(\theta)}{\alpha+1}-\frac{t'_2(\theta)}{\alpha}\right]+(1-\lambda)\left[\frac{t_2(\theta)t'_1(\theta)}{\alpha+1}-\frac{t_1(\theta)t'_2(\theta)}{\alpha}\right]=0.
\end{equation}
Rearranging the terms on the left hand side leads to the differential equation,
\begin{align}
\frac{t'_1(\theta)}{1+\alpha}[\lambda+\bar{\lambda} t_2(\theta)]&=\frac{t'_2(\theta)}{\alpha}[\lambda+\bar{\lambda} t_1(\theta)],\nonumber\\
\frac{t'_1(\theta)}{\lambda+\bar{\lambda} t_1(\theta)} &=\left(\frac{1+\alpha}{\alpha}\right)\frac{t'_2(\theta)}{\lambda+\bar{\lambda} t_2(\theta)},\label{momenteqold}
\end{align}
where $\bar{\lambda}=1-\lambda$. It is now easy to derive the corresponding objective function which turns out to be 
\begin{equation}
\frac{1}{\bar{\lambda}}\log(\lambda+\bar{\lambda} t_1(\theta))-\frac{1}{\bar{\lambda}}\left(\frac{1+\alpha}{\alpha}\right)\log(\lambda+\bar{\lambda} t_2(\theta))+C,
\end{equation}
for some constant $C$ independent of $\theta$.
Imposing the condition that the divergence has to be zero when $g = f_{\theta}$, the constant $C$ is recovered to be $\frac{1}{\bar{\lambda}}\frac{1}{\alpha}\log(\lambda+\bar{\lambda}t_3(\theta))$. Hence the new divergence can be written as 
\begin{equation}\label{eq:BridgeDivergence}
\rho^{(\alpha,\lambda)}(g,f_{\theta})=\frac{1}{\bar{\lambda}}\log(\lambda+\bar{\lambda} t_1(\theta))-\frac{1}{\bar{\lambda}}\left(\frac{1+\alpha}{\alpha}\right)\log(\lambda+\bar{\lambda} t_2(\theta))+\frac{1}{\bar{\lambda}}\frac{1}{\alpha}\log(\lambda+\bar{\lambda} t_3(\theta)).
\end{equation}
This is our class of \emph{bridge density power divergences}, defined for $\alpha\ge0$ and $\lambda\in[0,1]$. For $\lambda = 0$, this reduces to the class of logarithmic density power divergences; as $\lambda\to1$, we recover the class of density power divergences. It is also easy to verify that as $\alpha\rightarrow0$, the limiting divergence is the Kullback-Leibler divergence as given in Equation \eqref{eq:KLDiver} irrespective of the value of $\lambda$. Due to the consideration of efficiency, we will, in the rest of the paper, restrict the robustness parameter $\alpha$ to the range $[0,1]$.

The estimating equation \eqref{momenteqold} for the bridge divergence with parameters $(\alpha, \lambda)$ can be written as:
\begin{equation}\label{momentequation}
	\frac{\int f_{\theta}^{1 + \alpha}u_{\theta}}{\lambda + \bar{\lambda}\int f_{\theta}^{1+\alpha}} = \frac{\frac{1}{n}\sum_{i=1}^n f_{\theta}^{\alpha}(X_i)u_{\theta}(X_i)}{\lambda + \bar{\lambda}\frac{1}{n}\sum_{i=1}^n f_{\theta}^{\alpha}(X_i)}.
\end{equation}
This may be referred to as the BDPD moment equations. It is immediately obvious that the above equation represents a very rich class of score moment equations. When $\alpha = 0$, the above represents the ordinary likelihood score equation irrespective of the value of $\lambda$. More generally it represents a score moment equation where the scores are variably weighted by powers of densities depending on the parameter $\alpha$, and the weights are variably normalized depending on the parameter $\lambda$. When $\lambda = 0$, the weights are perfectly normalized (as in the case of the LDPD), while we get the non-normalized equations for $\lambda = 1$, as in the case of the DPD. For intermediate values of $\lambda$ we get partially normalized estimating equations, with the degree of normalization dropping off with increasing $\lambda$. We will observe the effect of this normalization throughout the rest of the paper, most notably in Section \ref{type0s}, where we will give some indication of its role in the occurrence of the spurious roots. 
\begin{rem}
The BDPD family was constructed using an appropriate combination of the DPD and the LDPD estimating equations with the same tuning parameter $\alpha$; there is no specific reason, apart from mathematical convenience, for considering the same $\alpha$ in both the estimating equations. However, if estimating equations with different values of $\alpha$ are combined it does not appear to lead to a genuine objective function for such an estimating equation. 
\end{rem}
\begin{rem}
Equation \eqref{besteq} cannot be obtained by starting directly with a convex combination of the DPD and the LDPD with the same value of $\alpha$.
\end{rem}
\section{Properties of Minimum BDPD Estimators}\label{prop}
Based on the bridge density power divergence defined in the previous section and an iid random sample $X_1,X_2,\ldots,X_n$ from a distribution $G$, an estimator of $\theta$ is given by
\begin{align}
\hat{\theta}^{(\alpha,\lambda)}_n := \argmin_{\theta\in\Theta}\frac{1}{\bar{\lambda}}\log(\lambda+\bar{\lambda} t_1(\theta))-\frac{1}{\bar{\lambda}}\left(\frac{1+\alpha}{\alpha}\right)\log\left(\lambda+\bar{\lambda} \frac{1}{n}\sum_{i=1}^n f_{\theta}^{\alpha}(X_i)\right).\label{bridge}
\end{align}
By definition the BDPD leads to an $M$-estimator and so its asymptotic and robustness properties can be derived from the well-established $M$-estimation theory. Therefore, we present the results for consistency and asymptotic normality of the bridge divergence estimator without proofs. For this, we use the following notation.
\[
P_{i,\gamma} = \int g^{i}f_{\theta}^{\gamma},\;\;Q_{i,\gamma} = \int g^if_{\theta}^{\gamma}u_{\theta}u_{\theta}^{\top},\;\;R_{i,\gamma} = \int g^if_{\theta}^{\gamma}u_{\theta},\;\;S_{i,\gamma} = \int g^if_{\theta}^{\gamma}\nabla u_{\theta}.
\]

We now present a consistency theorem with conditions in alignment with those in Wald's consistency theorem (\citet[Chapter 17]{Ferguson}) for the maximum likelihood estimator and as the proof is essentially the same, it is moved to the Supplementary Material.
\setcounter{thm}{0}
\begin{thm}[Consistency]\label{Consistent}
Suppose that the following assumptions hold.
\begin{enumerate}[label = \bfseries(C\arabic*)]
\item $\Theta$ is a compact metric space;\label{c1}
\item There exists a function $K(x)$ (independent of $\theta$) such that $|\varphi_{\alpha}(x)|\le K(x)$ and $K(X)$ has finite expectation (with respect to $G$). Here $\varphi_{\alpha}(x) = f_{\theta}^{\alpha}(x)$ for $\alpha > 0$ and $\varphi_0(x) = \log f_{\theta}(x)$;\label{c2}
\item For each $x$ and any sequence $\theta_n\to\theta$, $$\lim_{n\to\infty}f_{\theta_n}(x) = f_{\theta}(x),$$ for all $x$, except possibly on a set (which might depend on $\theta$ but not on the sequence $\{\theta_n\}$) of $\mu$-measure zero. Also, $\theta_g^{(\alpha,\lambda)}$ is the unique minimizer of the bridge density power divergence $\rho^{(\alpha,\lambda)}(g,f_{\theta})$. \label{c3}
\end{enumerate}
Then the minimum BDPD estimator $\hat{\theta}_n^{(\alpha,\lambda)}$ is strongly consistent for $\theta_g^{(\alpha,\lambda)}$ for any fixed $\alpha \in [0,1]$ and $\lambda > 0$.
\end{thm}
\begin{rem}
Note that the conditions are independent of the value of $\lambda$ and match those of Wald's consistency theorem when $\alpha = 0$. The consistency theorem only requires $\Theta$ to be a metric space and not an Euclidean space. The proof can be extended to settings other than iid samples using various generalizations of the uniform strong law of large numbers (USLLN).
\end{rem}
\begin{rem}
The main ingredient in the proof of Theorem \ref{Consistent} is Theorem 5.7 of \cite{VAAR98} that uses uniform convergence of the sample based objective functions to their population counterparts. For $\lambda > 0$, the function $\xi\mapsto \log(\lambda + \bar{\lambda}\xi)$ is Lipschitz on any bounded closed interval $I$ that does not contain $0$. This fact allows one to conclude the following implication (under \ref{c2}):
\begin{align*}
&\sup_{\theta\in\Theta}\left|\frac{1}{n}\sum_{i=1}^n f_{\theta}^{\alpha}(X_i) - \int gf_{\theta}^{\alpha}\right| = o_p(1),\quad(\mbox{follows from uniform SLLN})\\
\Rightarrow&\sup_{\theta\in\Theta}\left|\log\left(\lambda + \bar{\lambda}\frac{1}{n}\sum_{i=1}^n f_{\theta}^{\alpha}(X_i)\right) - \log\left(\lambda + \bar{\lambda}\int gf_{\theta}^{\alpha}\right)\right| = o_p(1).
\end{align*}
At $\lambda = 0$ the Lipschitz property fails and the above implication breaks down. This is the reason for restricting the range of $\lambda$ to $\lambda > 0$. Thus the above proof does not automatically cover the DPD family. To extend the consistency claim to $\lambda = 0$, one needs to make the additional assumption $\inf_{\theta\in\Theta} \int gf_{\theta}^{\alpha} > 0.$
\end{rem}
The proof of asymptotic normality of the minimum BDPD estimator can be obtained through a quadratic approximation of the objective function. Using Cram$\acute{e}$r-Rao type regularity conditions, we also observe that there exists a sequence of roots of the estimating equation which is consistent and asymptotically normal. We do not repeat the conditions here but refer the reader to Theorem 2 of \cite{bhhj98}. The proof is similar to Lehmann's proof (\cite{LEH}) of consistency and asymptotic normality of the MLE, and is hence omitted.  
\begin{thm}[Asymptotic Normality]\label{asympnorm}
Under certain regularity conditions, there exists a sequence of roots $\theta_n$ of the bridge divergence estimating equation which is consistent and $\sqrt{n}({\theta}_n - \theta_g^{(\alpha,\lambda)})$ has an asymptotically normal distribution with mean 0 and variance given by $J(\theta_g)^{-1}K(\theta_g)J(\theta_g)^{-1}$, where $\theta_g := \theta_g^{(\alpha, \lambda)}$,
\begin{align*}
K(\theta) &= (1-\lambda)^2P_{1,2\alpha}R_{0,\alpha+1}R_{0,\alpha+1}^{\top} - \bar{\lambda}\left[\lambda + \bar{\lambda}P_{0,\alpha+1}\right]R_{0,\alpha+1}R_{1,2\alpha}^{\top}\\ &\quad-\bar{\lambda}\left[\lambda + \bar{\lambda}P_{0,\alpha+1}\right]R_{1,2\alpha}R_{0,\alpha+1}^{\top} + \left[\lambda + \bar{\lambda}P_{0,\alpha+1}\right]^2Q_{1,2\alpha}\\
&\quad-\bar{\lambda}^2P_{1,\alpha}^2R_{0,\alpha+1}R_{0,\alpha+1}^{\top} + \bar{\lambda}\left[\lambda + \bar{\lambda}P_{0,\alpha+1}\right]P_{1,\alpha}R_{1,\alpha}R_{0,\alpha+1}^{\top}\\&\quad+\bar{\lambda}\left[\lambda + \bar{\lambda}P_{0,\alpha+1}\right]P_{1,\alpha}R_{0,\alpha+1}R_{1,\alpha}^{\top} - \left[\lambda + \bar{\lambda}P_{0,\alpha+1}\right]^2R_{1,\alpha}R_{1,\alpha}^{\top}\\
J(\theta) &= (\alpha + 1)Q_{0,\alpha+1}\left[\lambda + \bar{\lambda}P_{1,\alpha}\right] + S_{0,\alpha + 1}\left[\lambda + \bar{\lambda}P_{1,\alpha}\right]\\
&\quad-\alpha Q_{1,\alpha}\left[\lambda + \bar{\lambda}P_{0,\alpha+1}\right] - S_{1,\alpha}\left[\lambda + \bar{\lambda}P_{0,\alpha + 1}\right]\\
&\quad +\bar{\lambda}\alpha R_{1,\alpha}R_{0,\alpha+1}^{\top} - \bar{\lambda}(\alpha+1)R_{0,\alpha+1}R_{1,\alpha}^{\top}.
\end{align*}
\end{thm}
\subsection{Pythagorean Relation}
\cite{fe08} have claimed that under heavy contamination, the minimum LDPD estimator achieves a smaller bias (compared to the minimum DPD estimator); it is also suggested that this phenomenon can be partially explained by an approximate Pythagorean relation which the LDPD satisfies. We present a sequence of results which shows that such a Pythagorean relation holds in general for all BDPD (except the DPD), so that the LDPD result is a special case of the more general result. For any $0\le t\le 1$, let $\bar{t} = 1-t$. Define the cross entropy between any two densities $g$ and $f$ for $\lambda,\alpha\in[0,1]$ by,
\[
d_{\lambda,\alpha}(g,f) = \frac{1}{\bar{\lambda}}\frac{1}{1+\alpha}\log\left(\lambda + \bar{\lambda}\int f^{1+\alpha}\right) - \frac{1}{\alpha\bar{\lambda}}\log\left(\lambda + \bar{\lambda}\int gf^{\alpha}\right).
\]
The divergence induced by this cross entropy is given by
\[
D_{\lambda,\alpha}(g,f) = -d_{\lambda,\alpha}(g,g) + d_{\lambda,\alpha}(g,f).
\]
which is a scaled version of $\rho^{\lambda,\alpha}(g,f)$, satisfying $D_{\lambda,\alpha}(\cdot,\cdot) = \rho^{\lambda,\alpha}(\cdot,\cdot)/(1+\alpha)$.
\begin{thm}
Let $f$ and $\delta$ be given probability density functions and let $0\le\varepsilon\le 1$. If $g(\cdot) = (1-\varepsilon)f(\cdot) + \varepsilon\delta(\cdot)$ and $h$ is any positive function, then for any $\alpha\in[0,1]$ and $\lambda\in[0,1)$,
\[
d_{\lambda,\alpha}(g,h) = d_{\lambda,\alpha}(f,h) - \frac{1}{\alpha\bar{\lambda}}\log(1-\varepsilon) + O(T_{\epsilon,\delta}),
\]
where
\[
T_{\varepsilon,\delta} := \frac{\varepsilon}{1-\varepsilon}\left[\lambda + \bar{\lambda}\int \delta h^{\alpha}\right]\bigg/\alpha\bar{\lambda}\left[\lambda+\bar{\lambda}\int fh^{\alpha}\right].
\]
For $\lambda = 1$, we have
\[
d_{1,\alpha}(g,h) = d_{1,\alpha}(f,h) + \frac{\varepsilon}{\alpha}\left[\int \{f-\delta\}h^{\alpha}\right]. 
\]
\end{thm}
\begin{proof}
First consider the case $0\le \lambda < 1$. By definition of $d_{\lambda,\alpha}$, we have, using $\ell(h) = \frac{1}{\bar{\lambda}(1+\alpha)}\log\left(\lambda + \bar{\lambda}\int h^{1+\alpha}\right)$,
\begin{align*}
d_{\lambda,\alpha}(g,h) &= \ell(h) - \frac{1}{\bar{\lambda}\alpha}\log\left(\lambda + \bar{\lambda}\int \{(1-\varepsilon)f + \varepsilon\delta\}h^{\alpha}\right)\\
&= \ell(h) - \frac{1}{\bar{\lambda}\alpha}\log\left(\lambda + \bar{\lambda}\int(1-\varepsilon)fh^{\alpha} + \bar{\lambda}\int \varepsilon\delta h^{\alpha}\right)\\
&= \ell(h) - \frac{1}{\bar{\lambda}\alpha}\log\left(\varepsilon\left[\lambda + \bar{\lambda}\int \delta h^{\alpha}\right] + (1-\varepsilon)\left[\lambda + \bar{\lambda}\int fh^{\alpha}\right]\right).
\end{align*}
Now by a simple Taylor series expansion, we get that
\[
d_{\lambda,\alpha}(g,h) = \ell(h) - \frac{1}{\alpha\bar{\lambda}}\log\left((1-\varepsilon)\left[\lambda + \bar{\lambda}\int fh^{\alpha}\right]\right) + O(T_{\varepsilon,\delta}),
\] 
from which the result follows. For the case $\lambda = 1$, observe that
\begin{align*}
d_{1,\alpha}(g,h) &= \frac{1}{1+\alpha}\int h^{1+\alpha} - \frac{1}{\alpha}\int gh^{\alpha}\\
&= \frac{1}{1+\alpha}\int h^{1+\alpha} - \frac{1}{\alpha}\int fh^{\alpha} + \frac{\varepsilon}{\alpha}\left[\int fh^{\alpha} - \int \delta h^{\alpha}\right]\\
&= d_{1,\alpha}(f,h) + \frac{\varepsilon}{\alpha}\left[\int \{f-\delta\} h^{\alpha}\right].
\end{align*}
\end{proof}
\begin{rem}
This theorem can be used to prove a Pythagorean relation (for $0\le \lambda < 1$) similar to Theorem 3.2 of \cite{fe08}. The statement of the result is given below for completeness; however the proof is very similar to the one in \cite{fe08} and is hence omitted. The \cite{fe08} result thus becomes a particular case of the following theorem.
\end{rem}
\begin{thm}[Pythagorean Relation]
Let $f$ and $\delta$ be given probability density functions and let $0\le\varepsilon\le 1$. If $g(\cdot) = (1-\varepsilon)f(\cdot) + \varepsilon\delta(\cdot)$ and $h$ is any positive function, then for any $0\le \lambda < 1$,
\[
\Delta(g,f,h) = D_{\lambda, \alpha}(g,h) - D_{\lambda, \alpha}(g,f) - D_{\lambda, \alpha}(f,h) = O(\varepsilon\nu),
\]
where
\[
\nu = \lambda + \bar{\lambda}\max\left\{\int \delta f^{\alpha}, \int \delta h^{\alpha}\right\}.
\]
\end{thm}
\begin{rem}\label{remspur}
Theorem 3.4 is one of the main theoretical reasons for the behaviour of the minimum BDPD estimators to be observed in the subsequent sections over various choices of tuning parameters and various contaminating distributions. Note that the function
\[
\xi(\lambda) = \frac{1}{\bar{\lambda}}\frac{\lambda + \bar{\lambda}a}{\lambda + \bar{\lambda}b},
\]
for fixed positive values of $a$ and $b$, is necessarily increasing in $\lambda\in[0,1)$ if and only if $a\le b$. This implies that when
\begin{equation}\label{lesscon}
\int \delta h^{\alpha} \le \int fh^{\alpha},
\end{equation}
the error term $T_{\varepsilon,\delta}$ is an increasing function of $\lambda\in[0,1)$ and so the bias of the minimum BDPD estimator is expected to be an increasing function of $\lambda\in[0,1)$; however as the Pythagorean relation does not exist for the DPD, calculations based on $\xi(\lambda)$ are not helpful in theoretically comparing the bias of the minimum DPD estimator. The condition \eqref{lesscon} with $h = f_{\theta}$ and $f = f_{\theta_0}$ for $\theta,\theta_0\in\Theta$ is implied by the usual approximate singularity condition in the robustness literature; the latter condition dictates that the contaminating distribution (in this case represented by $\delta$) in the gross error model is approximately singular with the parametric family so that $\int \delta h^{\alpha}\approx 0$ and $\int fh^{\alpha}$ is relatively large. In the numerical simulations, we will see that this reasoning generally fits in well with the observed behaviour of the estimators except for very small values of $\alpha$; in the latter case the quantities (in Equation \eqref{lesscon}) can be very close and the above observations may not hold.

But when the contaminating distribution $\delta$ is concentrated at (or around) the mode of the model density $h = f_{\theta_0}$, the integral $\int \delta h^{\alpha}$ is often at least as large as $\int fh^{\alpha}$ so that the changing behaviour of $\xi(\lambda)$ over $\lambda$ with $a = \int \delta h^{\alpha}$ and $b = \int fh^{\alpha}$, is less predictable. We will observe in the rest of the paper that the performance of the minimum LDPD estimator (and several other minimum BDPD estimators) may suffer badly in this case. 
\end{rem}
In the following sections, we will consider two kinds of contamination for the parametric model. The first case will correspond to the approximate singularity idea of the gross error model where the contaminating (minor) component is well-separated from the target (major) component. We will refer to this as \emph{outer contamination}; in this case the contaminating values will be \emph{surprising observations} in the sense of \cite{LIND94}. In the second case of contamination, we will choose the contaminating component near the mode of the major component so that these observations are no longer surprising observations but will nevertheless distort the shape of the distribution relative to the parametric model. We will refer to this case as \emph{inner contamination}.     
\section{Spurious Behaviour under Inner Contamination}\label{type0s}
\cite{jhhb01} reported that in case of the exponential distribution, \emph{small outliers} (near the origin) made the minimum LDPD estimator non-robust. These small outliers are essentially what we have described as constituting inner contamination in Section \ref{prop}. Let $\eta_{\theta}(x)$ denote the density of the exponential distribution with mean $\theta$. The non-robustness of the minimum LDPD estimator under small outliers was demonstrated by \cite{jhhb01} in simulation studies and was confirmed by determining the LDPD between the densities $\eta_{\theta}$ and the $0.85\eta_1 + 0.15\delta_{x_0}$ mixture with $x_0 = 0.0001$ where $\delta_{x_0}$ represents the indicator function at $x=x_0$. In Figures \ref{fig21} and \ref{fig22}, we have exhibited the BDPD objective function between the above two densities over $\theta$ for $\alpha = 0.5$ and several values of $\lambda$ in $[0,1]$. 
\begin{figure}[H]\centering
\resizebox{\textwidth}{!}{%
\includegraphics{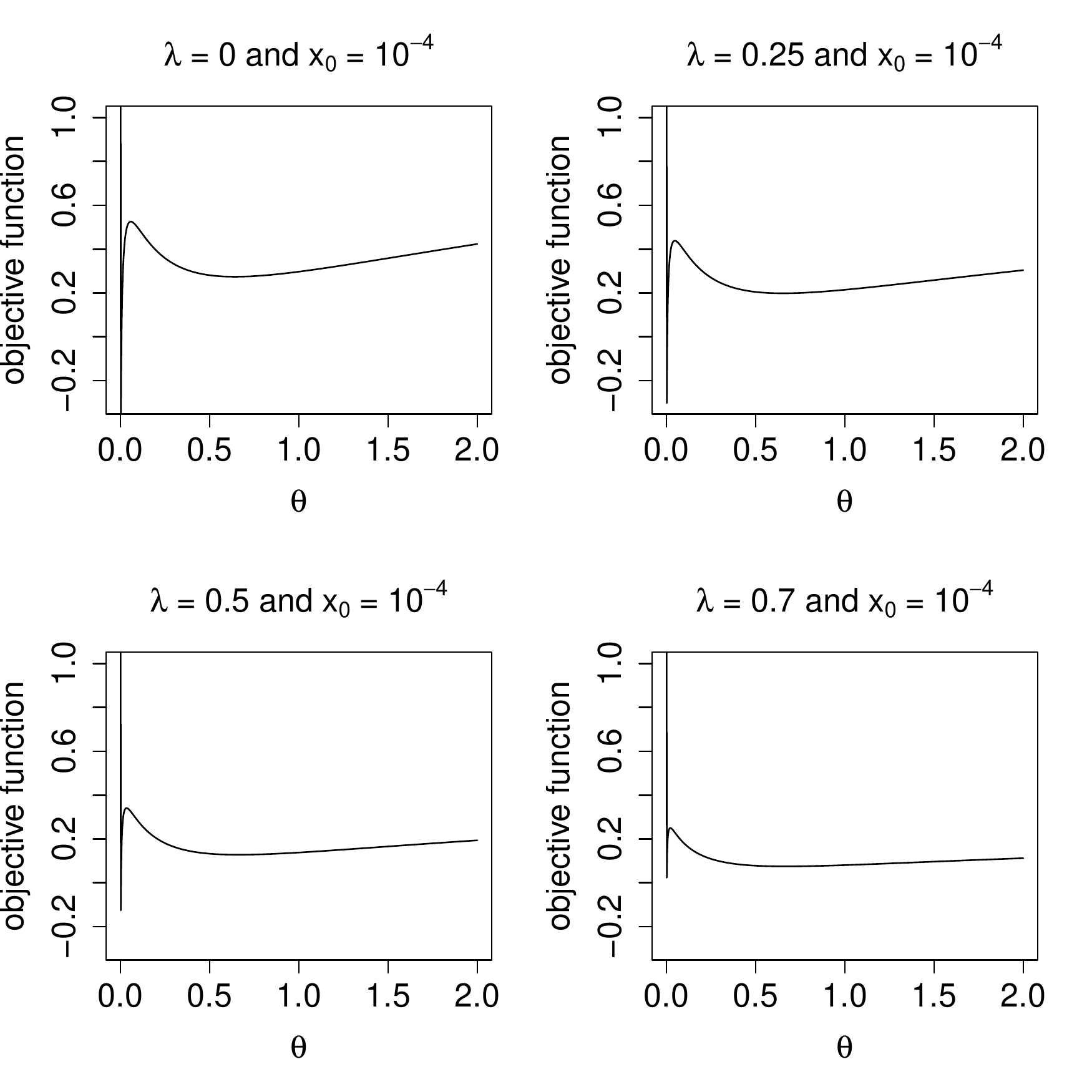}}
\caption{Plots of the BDPD objective function $(\lambda$ $=$ $0.0$,$0.25$,$0.5$,$0.7)$}
\label{fig21}
\end{figure}
\begin{figure}[H]\centering
\resizebox{\textwidth}{!}{%
\includegraphics{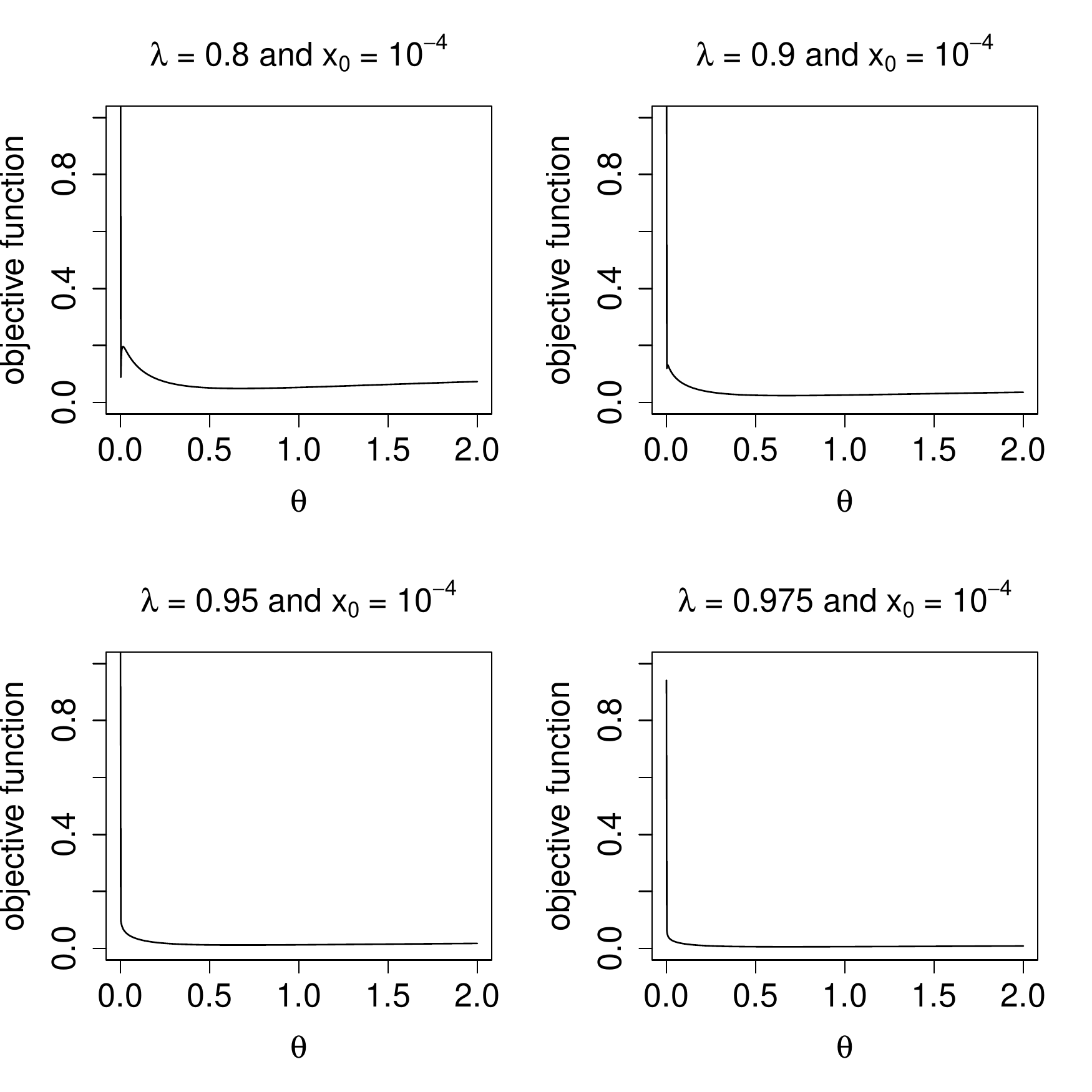}}
\caption{Plots of the BDPD objective function $(\lambda$ $=$ $0.8$,$0.9$,$0.95$,$0.975)$}
\label{fig22}
\end{figure}

It is clear that the global minimizer of the objective function remains stuck at a value very close to zero at least up to $\lambda = 0.7$. It is noteworthy that this spurious minimum may easily be missed by any gradient descent root search algorithm since the minimum is obtained as a needle sharp notch over a very limited region. It might very well be possible (and indeed it is the case in Figures \ref{fig21} and \ref{fig22}) that a local minimizer of the LDPD may serve as a reasonable robust solution even in this case. But the asymptotic results for a sequence of roots of the estimating equation present in the literature do not prescribe the method to \emph{choose} a suitable sequence of roots in case of multiple roots. In any case, it is clear that the global minimizer of the LDPD up to (at least) $\lambda = 0.7$ is a nonsensical value.

The main reason for this phenomenon in the LDPD and the bridge divergences close to it is the behaviour of the $\log$ function at $0$, where it diverges to $-\infty$. If one writes the LDPD between the $\eta_{\theta}$ and $0.85\eta_1 + 0.15\delta_{0.0001}$ densities, we can easily check that there is a sharp drop in the objective function around a value very close to zero because of the above logarithm effect (although at 0, the objective function is positive infinity). This effect slowly smooths out as the $\lambda$ parameter increases since the presence of the additive $\lambda$ term in each of the logarithms eventually forces the argument to be bounded away from zero. The validity of this reasoning is confirmed by examining the behaviour of the LDPD and the BDPD close to it in case of the $N(0,\sigma^2)$ model, where it can be expected that the estimate of $\sigma$ will be driven to zero if there are some outliers near $0$ in the sample. For that matter, this behaviour can also be seen in case of the $N(\mu,\sigma^2)$ model with some outliers near the true mean. Figure \ref{plotnorm} exhibits the LDPD at $\alpha = 0.5$ as a function of $\sigma$ when computed between the densities of the $N(0,\sigma^2)$ and the mixture $0.85N(0,1) + 0.15\delta_{0.001}$. Clearly, there is a reasonable local minimum around $\sigma = 0.8$; but it is beaten hands down by the useless global minimum in the neighbourhood of zero.
\begin{figure}[H]\centering
\resizebox{\textwidth}{!}{%
\includegraphics{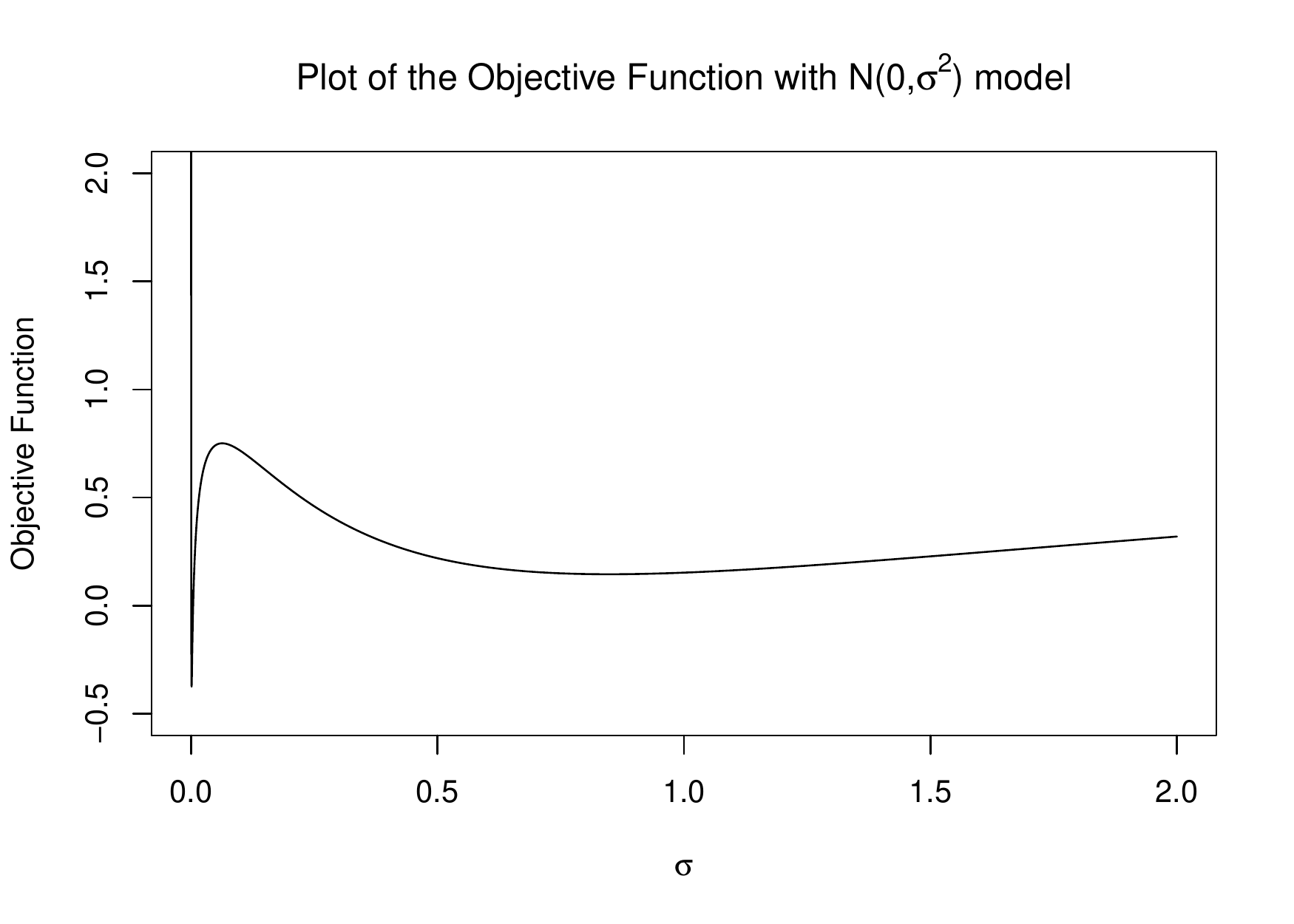}}
\caption{Plot of the LDPD objective function under the $N(0,\sigma^2)$ model.}
\label{plotnorm}
\end{figure}
While the behaviour observed in Figures \ref{fig21}--\ref{plotnorm} represent the patterns in the divergences between the actual densities, such behaviour can also be observed under pure data (although it is relatively rare in scale models). An actual random sample of size 20 from $N(0,1)$ with seed 129 was obtained in R as
\begin{equation*}
\begin{array}{rrrrrr}
-1.120900,& -0.989724,& -1.374697,& -1.355645,&  1.996755,&  0.695870,\\  
 0.771968,& -0.002847,&  1.008549,& -0.990280,&  1.131772,& -0.244929,\\
 1.186625,& -1.671537,& -0.081999,& -1.831365,&  0.358867,&  0.891639,\\
 0.489801,&  0.000010. & & & &
\end{array}
\end{equation*}
The exact value of the last observation is $1.048653\times 10^{-5}$, which forces a spurious behaviour in the LDPD objective function plotted in Figure \ref{plot_sample} for $\alpha = 0.8$. To be precise, if $f_{\sigma}(x)$ denotes the probability density of $N(0,\sigma^2)$ with respect to the Lebesgue measure, then
\[
\int f_{\sigma}^{\alpha + 1}(x) dx = \frac{1}{\sqrt{\alpha + 1}(\sqrt{2\pi}\sigma)^{\alpha}},
\] 
and the LDPD objective function satisfies
\begin{align*}
M_n(\sigma) &
\le \log\left(\frac{1}{\sqrt{\alpha + 1}(\sqrt{2\pi}\sigma)^{\alpha}}\right) - \left(\frac{\alpha + 1}{\alpha}\right)\log\left(\frac{1}{n}f_{\sigma}^{\alpha}(X_{20})\right)\\
&= \log\left(n^{1 + 1/\alpha}\sigma\right) + \frac{1}{2}\log\left(\frac{2\pi}{\alpha + 1}\right) + \left(\frac{\alpha + 1}{2}\right)\frac{X_{20}^2}{\sigma^2}.
\end{align*}
where $X_{20}$ is the last observation which is close to zero and $n = 20$. This inequality confirms our reasoning. As $\sigma$ slides down towards zero, for a while the first term involving the logarithm dominates, pulling the objective function down. However as $\sigma$ gets really close to zero, the third term takes over and there is an extremely sharp rise in the objective function, which generates a minimum with a razor sharp notch. Figure \ref{plot_sample} shows the spurious (global) minimum near zero (at $\sigma= 0.00001309906$ to be exact), although there is a reasonable local minimum at $\sigma = 1.265882$. For this sample, the spurious global minimum phenomenon is observed for all $\alpha \ge 0.500144205$ and all $\lambda\in[0,0.228827308]$. In contrast, the global minimum of the DPD objective function (also plotted in Figure \ref{plot_sample}) for $\alpha = 0.8$ is obtained at $\sigma = 1.20833.$ If the last observation $1.048653\times 10^{-5}$ were removed from this data set, this spurious minimum behaviour of the LDPD objective function disappears. The minimum LDPD estimator now equals $\sigma = 1.321807$ at $\alpha = 0.8$, an entirely reasonable value. (The corresponding minimum DPD estimator is $1.298228$). Thus one single observation can bring about an absolutely drastic change in the minimum LDPD estimator which is against the spirit of stability that robust estimators should have; so far we (or indeed anybody else), have not detected such spurious behaviour in any scenario involving the DPD.

Thus the spurious root issue in case of the LDPD is not a isolated problem limited to the case of the exponential distribution. It is, in fact, a more serious problem in the case of the location-scale model (compared to just the scale model), as we will see in the next section.
\begin{figure}[H]\centering
\resizebox{\textwidth}{!}{%
\includegraphics{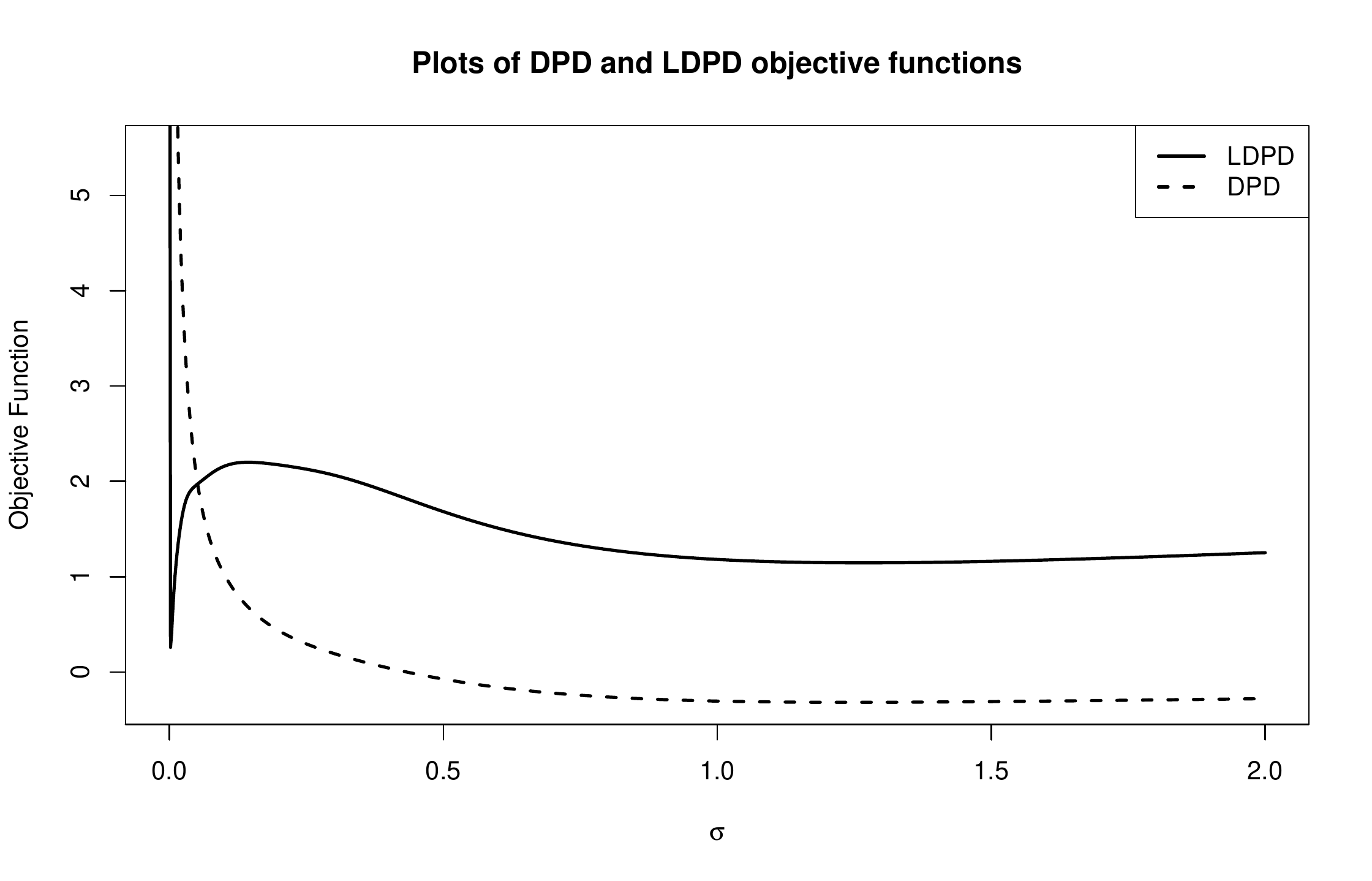}}
\caption{Plots of the sample based DPD and LDPD objective functions under the $N(0,\sigma^2)$ model based on a sample of size 20 from $N(0,1)$.}
\label{plot_sample}
\end{figure}
\section{Unboundedness of the Bridge Divergences}\label{sec:Unbounded}
In the previous section, we explained the reason for observing a spurious minimum with LDPD in the case of inner contamination and argued that it might happen even in case of real data generated from the pure model. Intuitively, this may be explained by the fact that the probability of observing a value near the mode of the majority distribution is not too small at moderate sample sizes. 

Here we will formally prove that the sample version of the BDPD objective function (given on the right hand side of Equation \eqref{eq:BridgeDivergence}) is unbounded below in case the parametric model family is a location-scale family $\mathcal{G}_{f,\Theta}$, where $f$ is a probability density function, $\Theta = \mathbb{R}\times\mathbb{R}^+$, and 
\[
\mathcal{G}_{f,\Theta} = \left\{f_{\theta}(\cdot):\,f_{\theta}(x) = \frac{1}{\sigma}f\left(\frac{x - \mu}{\sigma}\right)\mbox{ for some }\theta := (\mu,\sigma)\in\Theta\right\}.
\]
This result proved in Theorem \ref{thm:UnboundedBelow} implies that one cannot fit a location-scale family using the minimizer of a BDPD, except the DPD.
\begin{thm}\label{thm:UnboundedBelow}
Let $X_1,X_2,\ldots,X_n$ be independent and identically distributed observations from a density $g,$ which is modeled by the densities in the location-scale family of distributions $\mathcal{G}_{f,\Theta}$ with a fixed $f$ satisfying $f(0) > 0.$ Then, for any $0\leq\lambda<1$ and $\alpha > 0,$  
\begin{equation}\label{eq:UnboundedBelow}
\inf_{\theta \in \Theta} \left[\log\left(\lambda + \bl\int f_{\theta}^{1+\alpha}(x)dx\right) - \left(\frac{1+\alpha}{\alpha}\right)\log\left(\lambda + \frac{\bl}{n}\sum_{i=1}^{n} f_{\theta}^{\alpha}(X_i)\right)\right] = -\infty.
\end{equation}
\end{thm}
\begin{proof}
Set the objective function on the left hand side of Equation \eqref{eq:UnboundedBelow} as $M_n(\theta)$. Then for $\theta = (\mu, \sigma),$
\begin{equation*}
M_n(\theta) = \log\left(\lambda + \frac{\bl}{\sigma^{\alpha}} \int f^{1+\alpha}(x)dx\right) -  \frac{1+\alpha}{\alpha}\log\left(\lambda + \frac{\bl}{n \sigma^\alpha}\sum_{i=1}^{n} f^{\alpha}\left(\frac{X_i - \mu}{\sigma}\right)\right). 
\end{equation*}
We will now show that for each $1\le j\le n$, $M_n(X_j, \sigma)$ (that is, $\theta = (X_j, \sigma)$) converges to $-\infty$ as $\sigma\downarrow 0$. Fix $1\le j\le n$. Since $f(x) \ge 0$ for all $x$, we get, by taking $\mu = X_j$,
\[
\left(\frac{1+\alpha}{\alpha}\right)\log\left(\lambda + \frac{\bl}{n \sigma^\alpha}\sum_{i=1}^{n} f^{\alpha}\left(\frac{X_i - \mu}{\sigma}\right)\right)
\geq \left(\frac{1+\alpha}{\alpha}\right)\log \left(\frac{\bl}{n \sigma^\alpha} f^{\alpha}(0)\right).
\]
Substituting this bound in $M_n(\theta) = M_n(X_j, \sigma)$, we obtain
$$M_n(X_j,\sigma) \leq \log \left(n^{\frac{1+\alpha}{\alpha}} \left(\lambda \sigma^{\alpha + 1} + \bl\sigma\int f^{1+\alpha} \right)\right) - C_f,$$
where
\[
C_f = (1+\alpha)\log\left(\bl^{\frac{1}{\alpha}} f(0)\right) .
\]
Letting $\sigma$ tend to zero implies $M_n(X_j, \sigma)\to -\infty$ proving Equation \eqref{eq:UnboundedBelow}.
\end{proof}
\begin{rem}
From the proof, it is seen that the global minimizer of the bridge divergence in case of any sample is at the extreme point ($\sigma = 0$) and the global minimum over all $(\mu, \sigma)$ combinations, is attained for atleast $n$ points namely $(X_j, 0),\,1 \le j\le n$. This is similar in spirit to the classical example of likelihood inference for Gaussian mixture modelling as explained in Example 2.4.5 of \cite{BD15}. Note that the theorem does not cover the case of DPD which is given below.  
\end{rem}
\begin{figure}[H]\centering  
\resizebox{\textwidth}{!}{%
\includegraphics{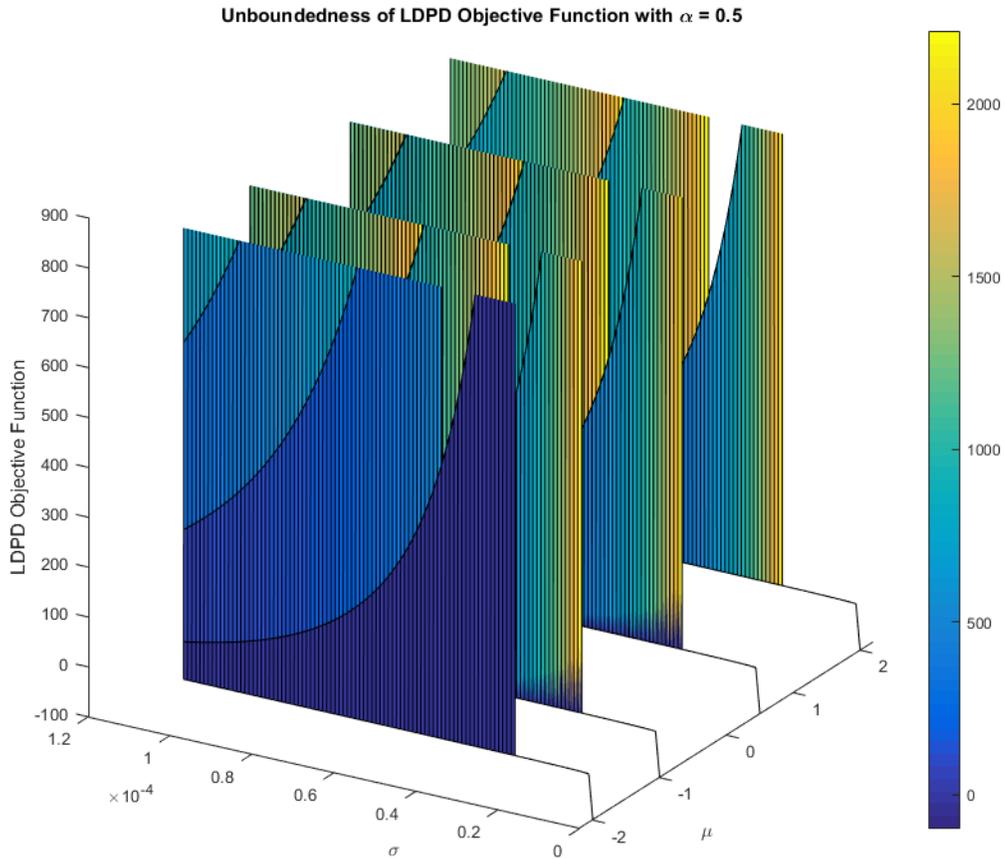}}\vspace{-24mm}
\caption{LDPD objective function for the $N(\mu,\sigma^2)$ model with an artificial dataset of four observations.}
\label{thd}
\end{figure}

In order to illustrate the phenomenon discussed above, we plot the LDPD objective with $\alpha = 0.5$ in Figure \ref{thd}, based on an artificial sample of size 4, where the sample observations are $-2,-1,0.5,2.$ The underlying model family is assumed to be $\{N(\mu,\sigma^2): \mu \in \mathbb{R},\sigma > 0\}$. We observe that the LDPD objective drops down sharply to $-\infty$ as $\sigma$ approaches $0$, when $\mu$ is either of the four sample points. See Section S.2 of supplementary materials for more details.
\begin{thm}\label{thm:DPDUnboundedAbove}
Let $X_1,X_2,\ldots,X_n$ be independent and identically distributed observations from a density $g,$ which is modeled by the densities of the location-scale family $\mathcal{G}_{f,\Theta}$ with a fixed $f$. Suppose that $\lim_{|x| \rightarrow \infty} f(x) = 0.$  Then, for any $\alpha > 0,$ the following holds for all $n$ if $\mu$ does not belong to the set $\{X_1,...,X_n\}$, and for all $n > \frac{(1+\alpha)f^{\alpha}(0)}{\alpha \int f^{1+\alpha}}$ if $\mu$ is equal to some $X_i$ in the above set:  
\begin{equation}\label{eq:UnboundedAbove}
\lim \limits_{\sigma \rightarrow 0} \left(\int f_{(\mu,\sigma)}^{1+\alpha}(x)dx - \left(\frac{1+\alpha}{\alpha}\right)\frac{1}{n}\sum_{i=1}^{n} f_{(\mu,\sigma)}^{\alpha}(X_i)\right) = \infty.
\end{equation}
\end{thm}
\begin{proof}
First, note that the quantity on the left hand side of \eqref{eq:UnboundedAbove} whose limit needs to be evaluated as $\sigma \rightarrow 0$, can be written as
\begin{equation}\label{eq:sclconv}
\sigma^{-\alpha}\left(\int f^{1+\alpha} - \left(\frac{1+\alpha}{\alpha n}\right)\sum_{i=1}^{n} f^{\alpha}\left(\frac{X_i-\mu}{\sigma}\right)\right).
\end{equation}
Observe that, if $\mu$ does not belong to the set $\{X_1,...,X_n\}$, then $$\lim \limits_{\sigma \rightarrow 0} f^{\alpha} \left(\frac{X_i - \mu}{\sigma}\right) = 0\quad\mbox{forall}\quad 1 \leq i \leq n.$$
Consequently, \eqref{eq:sclconv} goes to $\infty$ as $\sigma \rightarrow 0.$ On the other hand, if $\mu$ equals $X_j$ for some $1 \leq j \leq n,$ then \eqref{eq:sclconv} can be written as:
\begin{equation}\label{eq:exc2}
\sigma^{-\alpha}\left(\int f^{1+\alpha} - \left(\frac{1+\alpha}{\alpha n}\right) f^\alpha (0)- \left(\frac{1+\alpha}{\alpha n}\right)\sum_{i\neq j} f^{\alpha}\left(\frac{X_i-\mu}{\sigma}\right)\right)
\end{equation}
It is now easy to see that for $n > \frac{(1+\alpha)f^{\alpha}(0)}{\alpha \int f^{1+\alpha}}$, \eqref{eq:exc2} goes to $\infty$ as  $\sigma \rightarrow 0.$ This completes the proof. 
\end{proof}
\begin{rem}
Theorem \ref{thm:DPDUnboundedAbove} shows that the DPD objective function for the location-scale family (for $\alpha > 0$) diverges to $\infty$ as $\sigma \rightarrow 0,$ for all $n,$ if $\mu$ is different from all the observations. On the other hand, if $\mu$ equals one of the observations, the situation differs in the sense that there exists $a_f$ such that if $n > a_f$, the DPD objective function diverges to $\infty$ and if $n < a_f$, it diverges to $-\infty$. The case of $n = a_f$ (when $a_f$ is a natural number) depends on the rate of convergence of $f(x)$ to zero as $|x|\to\infty$.
\end{rem}
\begin{rem}
Theorem \ref{thm:UnboundedBelow}, in particular, covers the case of $N(\mu, \sigma^2)$ parametric family and out of entire BDPD family the DPD alone stands out as a practically valid method of inference. \cite{fe08} provided an iterative algorithm to obtain the \textit{minimizer} of the LDPD in case the parametric family is one of the exponential families, applicable to $N(\mu , \sigma^2)$ family. However, while their iterative algorithm is monotone decreasing, it does not in general guarantee convergence to the global minimizer. In fact, the small bias property shown for the scale parameter in their numerical study (Table 2, \cite{fe08}) suggests, in view of our discussion in Section \ref{type0s}, that their algorithm mostly converges to a local minimizer, atleast in the normal family. This is also very similar to the case of likelihood inference for Gaussian mixture modelling wherein one uses the EM algorithm to get hold of the useful local minimizer of the log-likelihood. This fact is also consistent with the asymptotic normality result presented in Section 5 of \cite{fe08}. The asymptotic normality holds for a sequence of roots of the estimating equation while the global minimizer does not exist (in the interior of the parameter space). Note that in this case the global minimizer does not appear as a root of the estimating equation owing to the fact that any global minimum is attained on the boundary of the parameter space.
\end{rem}
\begin{rem}\label{rem:ss}
The result as stated in Theorem \ref{thm:UnboundedBelow} does not apply to a scale family with a fixed location (say, $0$). But if one of the observations is exactly $0$, then the proof as presented above can be employed to get the same conclusion. More precisely, we have, (taking $\sigma = X_{(1)}$),
\[
\inf_{\sigma > 0}\left[\log\int f_{\sigma}^{1+\alpha} - \left(1 + \frac{1}{\alpha}\right)\log\left(\frac{1}{n}\sum_{i=1}^n f_{\sigma}^{\alpha}(X_i)\right)\right] \le \log(n^{1 + 1/\alpha}X_{(1)}) + D_{f},
\]
where $D_{f} = \log\int f^{1+\alpha} - \left(1+{\alpha}\right)\log f(1)$ under the assumption $f(1) > 0$. Here $X_{(1)}$ denotes the first order statistic in the sample $X_1, \ldots, X_n$ and $f_{\sigma}(x) = f(x/\sigma)/\sigma$ for a density $f$ on $\mathbb{R}^{+}$. This indicates that if we have inner contamination (that is, contamination near the mode $0$) so that $n^{1 + 1/\alpha}X_{(1)}$ is small enough, then there is a possibility of a spurious global minimum near zero. Observe that the true distribution is a mixture described as
\[
X|Z = 0 \sim f_{\sigma_0}\quad\mbox{and}\quad X|Z = 1 \sim \delta_0,
\]
where $Z\sim \mbox{Bernoulli}(p)$ and $\delta_0$ represents a point mass at $0$. This represents a simple case inner contamination model with contamination level $p$. Then in a sample of size $n$, we have for any fixed $\varepsilon > 0,$
\[
\mathbb{P}(X_{(1)} > n^{-1-1/\alpha}\varepsilon) = \left(1 - p - (1 - p)F\left(n^{-1-1/\alpha}\varepsilon/\sigma_0\right)\right)^n \le (1 - p)^n,
\]
implying convergence with probability one of $n^{1 + 1/\alpha}X_{(1)}$ to zero as $n\to\infty$ for any fixed $p> 0$. Here $F$ represents the cumulative distribution function of $X$. In a given sample though, this spurious minimum may disappear (or may not be significantly pronounced) depending on the proportion of contamination. In this case, the BDPD do not behave so badly as in the location-scale family. As already observed, the divergences close to the DPD will lead to reasonable estimators.
\end{rem}
\section{Towards the Desired Estimator}\label{sec:Chain_Algorithm}
As we have seen, the analysis using the LDPD can be beset with different kinds of problems, both theoretical and computational. In this section we add to the discussion of the possible flaws in the current state of the analysis based on the LDPD, eventually giving a recipe for consolidating the existing knowledge and overcoming the present difficulties so that one can arrive at a best compromise. We do not exactly refute the findings of \cite{fe08} and \cite{fuji13} as we find a substantial part of this research to be valuable and useful. However, there are too many loose ends that remain unaccounted for which limit the practical applications of the method without further consolidation. 

\cite{fe08} and \cite{fuji13} define the latent bias of an estimator as the difference between the target parameter and the limit of the estimator. To describe the basic flaw with the minimized LDPD estimator or in general the normalized estimating equations, we rework the arguments of \cite{fuji13} which gives the hint of a very small latent bias under heavy contamination for the minimum LDPD estimator. Let $f(x) = f_{\theta^*}(x)$ be the target density within our parametric family and $\delta(x)$ be the contamination density so that the data generating density $g$ is given by
\[
g(x) = (1-\varepsilon)f(x) + \varepsilon \delta(x).
\]
Let $\ell(x;\theta) = \log f_{\theta}(x)$ and $u_{\theta}(x) = \nabla\ell(x;\theta)$ be the usual log-likelihood and the score, respectively. A general normalized estimating equation is given by
\begin{equation}\label{eq:NormEstEq}
\frac{\mathbb{E}_g[\xi(\ell(X;\theta))u_{\theta}(X)]}{\mathbb{E}_g[\xi(\ell(X;\theta))]} = \frac{\mathbb{E}_{f_{\theta}}[\xi(\ell(X;\theta))u_{\theta}(X)]}{\mathbb{E}_{f_{\theta}}[\xi(\ell(X;\theta))]},
\end{equation}
where $\xi(\cdot)$ is a non-negative weight function satisfying $\xi(a)\to0$ as $a\to-\infty$ and $\mathbb{E}_h[\cdot]$ represents the expectation with respect to the density $h$. Assume that
\begin{equation}\label{eq:ApproxOutlier}
\mathbb{E}_{\delta}[\xi(\ell(X;\theta))] \approx 0,\quad\mbox{and}\quad \mathbb{E}_{\delta}[\xi(\ell(X;\theta))u_{\theta}(X)]\approx 0,
\end{equation}
in a neighborhood of $\theta = \theta^*$, which is usually satisfied under the classical outlier model formulation (or, as we call it, outer contamination). Substituting the form of the density $g$, we get
\[
\frac{(1-\varepsilon)\mathbb{E}_{f_{\theta^*}}[\xi(\ell(X;\theta))u_{\theta}(X)] + \varepsilon\mathbb{E}_{\delta}[\xi(\ell(X;\theta))u_{\theta}(X)]}{(1-\varepsilon)\mathbb{E}_{f_{\theta^*}}[\xi(\ell(X;\theta))] + \varepsilon\mathbb{E}_{\delta}[\xi(\ell(X;\theta))]} = \frac{\mathbb{E}_{f_{\theta}}[\xi(\ell(X;\theta))u_{\theta}(X)]}{\mathbb{E}_{f_{\theta}}[\xi(\ell(X;\theta))]}.
\]
Whenever the approximations in \eqref{eq:ApproxOutlier} hold, we get the following approximate equality
\[
\frac{\mathbb{E}_{f_{\theta^*}}[\xi(\ell(X;\theta))u_{\theta}(X)]}{\mathbb{E}_{f_{\theta^*}}[\xi(\ell(X;\theta))]} \approx \frac{\mathbb{E}_{f_{\theta}}[\xi(\ell(X;\theta))u_{\theta}(X)]}{\mathbb{E}_{f_{\theta}}[\xi(\ell(X;\theta))]}.
\]
Thus, $\theta^*$ is an approximate solution of the normalized estimating equation \eqref{eq:NormEstEq}. For concreteness, consider the case where $f(x) = f_{0}(x)$ (i.e., $\theta^* = 0$), where $f_{\mu}$ denotes the density of the normal distribution with mean $\mu$ and variance $1$ and $\delta(x) = f_{10}(x)$. Take $\theta = 10$ in the normalized estimating equation \eqref{eq:NormEstEq}. It is easy to see that
\[
\mathbb{E}_{f_0}[\xi(\ell(X;\theta))]\approx 0, \quad\mbox{and}\quad \mathbb{E}_{f_0}\left[\xi(\ell(X;\theta))u_{\theta}(X)\right]\approx 0,
\]
in a neighborhood of $\theta = 10$. Therefore, as above, the equality
\[
\frac{\mathbb{E}_{f_{10}}[\xi(\ell(X;\theta))u_{\theta}(X)]}{\mathbb{E}_{f_{10}}[\xi(\ell(X;\theta))]} = \frac{\mathbb{E}_{f_{\theta}}[\xi(\ell(X;\theta))u_{\theta}(X)]}{\mathbb{E}_{f_{\theta}}[\xi(\ell(X;\theta))]},
\]
holds approximately. Thus, not only $\theta^* = 0$ but $\theta = 10$ is also an approximate root of this estimating equation. The latent bias argument for normalizing estimating equations would be one sided if one were to use it only for $\theta^*$, and not for the root corresponding to the contaminating component. One probably can have an estimator with small latent bias in this case, but one would have to appropriately choose between the multiple roots to ensure this small bias.

To validate our argument, we generated 20 observations randomly from the mixture distribution $0.9 N(10,1) + 0.1 N(0,1)$. We present the entire sample for the future reproducibility of our results.
\begin{equation*}
\begin{array}{rrrrrr}
 10.73402487, & 10.07316778, & 12.06112801, &  8.52976810, &  9.66834408,\\
 10.69923160, & 10.65200828, &  9.87964160, &  8.76851845, & 10.91329572,\\ 
  8.12296270, &  8.57642728, & 8.52115355, & 12.42173904, & 10.45406538,\\
 9.86883487, & -0.89282090, & -0.06273843,& 1.34072270, & -0.65001187.
\end{array}
\end{equation*}
The parametric family is taken to be $\mathcal{F} = \{N(\mu,1): \mu \in \mathbb{R}\}.$ The LDPD objective function in \eqref{eq:LDPDSampleObj} with $\alpha = 0.5$ is plotted against $\theta$ in Figure \ref{tworoots}. It is seen to have a (local) minimum near $0$ and a (possibly global) minimum around $10$, which in this case is the mean of the larger component of the mixture. Note that while there are at least two obvious roots of the corresponding estimating equation, a clear root selection strategy is unavailable. While it so happens that in this case the divergence is well behaved, in many standard cases it is not (as we have seen in Sections \ref{type0s} and \ref{sec:Unbounded}).
\begin{figure}[H]\centering
\resizebox{\textwidth}{0.5\textheight}{%
\includegraphics{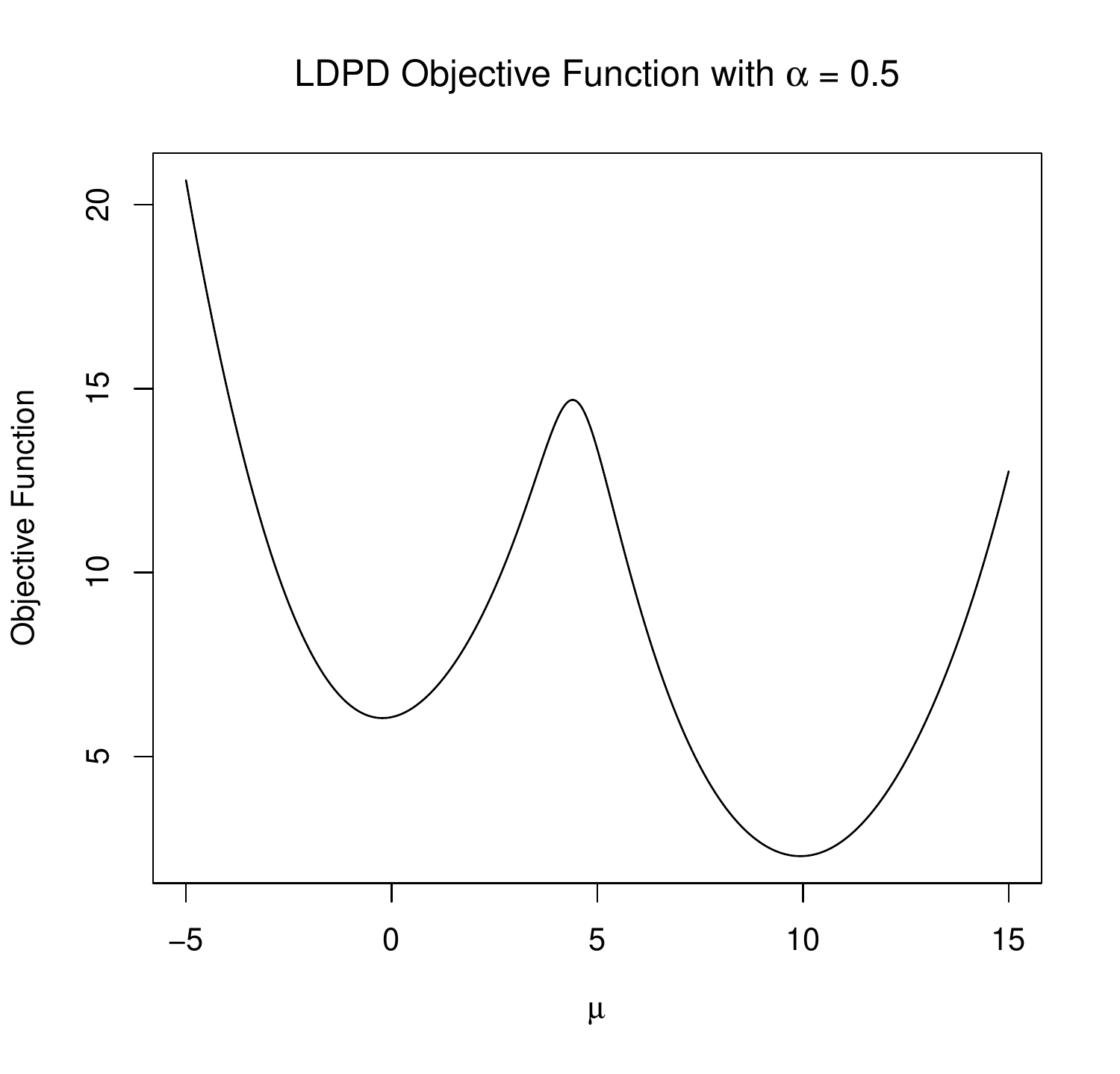}}
\caption{LDPD Objective Function for the $N(\mu,1)$ model under normal mixture data.}
\label{tworoots}
\end{figure}

 We are now in a position to give an analysis of the minimum LDPD estimator in particular (and the minimum BDPD estimator in general) by consolidating the different bits of results that we have presented so far in this paper. In Section \ref{prop} we have proved the consistency of the minimum BDPD estimator. What it means in layman's terms is that the global minimizer of the empirical LDPD objective function converges to the global minimizer of the theoretical objective function obtained by replacing the empirical distribution function by the true one. However, the usefulness of these consistency results are immediately challenged by our findings in Sections \ref{type0s} and \ref{sec:Unbounded} as well as the previous part of this section. First of all, it is quite possible, even in case of pure data, that the global minimizer of the empirical objective function is a value which is entirely useless to the statistician for all practical purposes. Yet, the procedure might actually have a very reasonable local minimizer close to our target. This is what we have observed in Figure \ref{plot_sample}, for example, \cite{fe08} or \cite{fuji13} -- or, indeed, anybody else for that matter -- do not give any root selection criteria in such cases; and in the absence of any such criteria, one would be hard put to justify the choice of a local minimizer, when a perfectly legitimate (although statistically useless) global minimizer exists.  Our results in Section \ref{sec:Unbounded} show that the global minimizers for the LDPD would always be at the boundary of the parameter space and will be of very little practical value to us under location-scale models. However, in almost every such case there exists -- as we have seen in our simulations -- reasonable roots of the estimating equations, representing appropriate local minima of the objective function. The bias and mean square error reported in Figures 2-5 of \cite{fuji13} correspond to the root generated by this reasonable local (but not global) minimum. 
 
 Secondly, an equally serious problem is the extreme shift in the target parameter itself under inner contamination when LDPD is the divergence of choice. As we have seen in Figure \ref{plotnorm}, a small proportion of inner contamination is enough to shift the target parameter (the global minimizer of the divergence between the theoretical densities) to such an extreme degree that the new target does not characterize the original major component in any way at all. Yet, once again, the divergence has a local minimizer which reasonably characterizes the the major component. Once again, however, there is no conceivable reason for considering a local minimizer as the target in the presence of a global minimizer (which, unfortunately, does not characterize the component of interest). 
 
 It is clear, therefore, that usual inference based on the LDPD alone can lead the inference astray, if one proceeds with the global minimizer.
 Other reasonable roots do exist, but root selection strategies do not. Existing literature claims the existence of ``a root'' which has the desired properties, without any recipe for arriving at that root. Such inadequacy exists in case of many other members of the BDPD class. The only member within this class which are entirely free -- as far as it is observed in our explorations -- from the anomalies of spurious roots and ill-defined targets is the DPD.  In every model and every case that we have looked at, the DPD has a well defined minimizer that would reasonably represent, in theory, a properly described target, and would generate a suitable estimator for the same target under randomly generated data. Acknowledging that the LDPD does have certain roots which are superior in dealing with the bias under heavy contamination, we propose to utilize the global minimizer of the DPD to arrive at the desired root of the LDPD. The next subsection describes the chain algorithm which we propose for this purpose.   
 \subsection{The Chain Algorithm}\label{chain}
 Under repeated simulations we have observed, under many standard parametric models, that as far as the global minimizers of the BDPDs are considered over all $\lambda \in [0, 1]$ given a specific value $\alpha$, the minimum DPD estimator is almost always the best in terms of latent bias (and not the minimum LDPD estimator). At the same time we acknowledge that the minimum LDPD estimating equations (or, more generally, the minimum BDPD estimating equations) may have roots which represent some local minimum of the divergences which might improve upon the performance of the minimum DPD estimator in terms of latent bias. As indicated by the heuristics, and as we have seen in our simulations, the LDPD (more generally the BDPD) has a local minimizer that is close to the true $\theta^*$. This we believe is at least partially due to the Pythagorean relation that we presented in Section \ref{prop}. Since the members of the BDPD family form a literal bridge between the DPD and LDPD, we have an opportunity for getting at a root of the LDPD estimating equations that is consistent and asymptotically normal. A completely rigorous proof is unavailable at this time, but we provide a strong heuristic argument and extensive simulations to support this claim.

 From the form of the bridge divergence, for a fixed $\alpha$ and $\theta$, we obtain the limit
 \begin{equation}\label{eq:BDPDClose}
 \rho^{(\alpha, \lambda_k)}(g,f_{\theta}) \to \rho^{(\alpha, \lambda_0)}(g,f_{\theta})\quad\mbox{as}\quad \lambda_k\to\lambda_0.
 \end{equation}
 Heuristically this indicates that $\rho^{(\alpha, \lambda)}(g, f_{\theta})$ has at least a local minimizer close to $\hat{\theta}^{\alpha}_{n1}$ (defined in Equation \eqref{eq:MDPDE}) if $\lambda$ is close to one. For example, one can take $\lambda = 1 - n^{-1/2}$ which is sufficiently close to $1$ for this root to be reasonably good. We shall now present a chain algorithm for getting hold of a good root of the LDPD  estimating equation (indeed, all bridge estimating equations). The chain algorithm proceeds in the following steps: Fix $\alpha\in[0,1].$
 \begin{enumerate}
 	\item First choose a sequence $\{\lambda_i\}$ satisfying
 	\[
 	1 = \lambda_K > \lambda_{K-1} > \cdots > \lambda_2 > \lambda_1 > \lambda_0 = 0\quad\mbox{and}\quad \max_{1\le i\le K}|\lambda_i - \lambda_{i-1}| \le r_n,
 	\]
 	with $r_n \to 0$ at some rate. Define a function $\lambda \mapsto \hat{\theta}^{\alpha}(\lambda).$
 	\item Solve the DPD problem completely, that is, find the global minimizer $\hat{\theta}^{\alpha}_{n1}$ of the sample estimate of the DPD. Set $\hat{\theta}(\lambda_K) = \hat{\theta}^{\alpha}_{n1}$. Note that $\lambda = 1$ in the bridge divergence expression in Equation \eqref{eq:BridgeDivergence} corresponds to the DPD (in the limit).
 	\item For $i = K-1, K-2, \ldots, 0$, find the local minimizer of the sample estimate of the bridge divergence with parameters $(\alpha, \lambda_{i})$ that is closest to $\hat{\theta}(\lambda_{i+1}).$ Set this closest local minimizer as $\hat{\theta}(\lambda_i)$.
 	\item Return $\{\hat{\theta}(\lambda_i):\,0\le i\le K\}$.
 \end{enumerate}
 Step 3 above can be solved by using any off-the-shelf algorithm for minimization with $\hat{\theta}(\lambda_{i+1})$ as a starting point. To draw analogue with existing algorithms of this type, we note that the LASSO that has taken over the literature in the past few years is solved by using a chain/path algorithm as above; there the issue is about getting fast algorithm not distinguishing local and global minimizers (it is a convex problem).
 \begin{conj}
 	In any model under the assumptions that guarantee the asymptotic normality of $\hat{\theta}_{n1}^{\alpha}$, $\hat{\theta}^{\alpha}(\lambda_0)$ is the consistent root of the LDPD estimating equation that is closest to $\theta^*$. And this is the solution claimed by \cite{fe08} to have a small latent bias.
 \end{conj}
 The simulations using this chain algorithm are very encouraging and we think that a proof of the above conjecture can be obtained by an application of the Taylor series and using \eqref{eq:BDPDClose} with an appropriate choice of rate of convergence of $\lambda_k - \lambda_0$ to zero. In many \emph{nicely behaved} parametric densities, simply choosing the root of LDPD estimating equation closest to the minimum DPD might be good enough. However, in general cases, the difference between LDPD and DPD estimating equations can be drastic for this simple trick to work (at least for a theoretical guarantee).

 In the following section we will frequently use the term ``minimum bridge divergence estimator". It is to be understood that this will refer to the local minimizer obtained by using the chain algorithm which uses the global minimizer of the DPD as the starting value. In addition, we will also frequently use the term``root of the estimating equation". In this case it will be understood that the correspondence is with a local minimizer, and not an intervening local maximizer (or saddle point), which may also produce a legitimate root of the estimating equation. 
\section{Simulation Study}\label{sec:Simulations}
In this section, we apply the chain algorithm discussed in Section \ref{sec:Chain_Algorithm} on simulated data. We consider the exponential and the normal scale families, the former to illustrate the case of outer contamination, and the latter to illustrate the case of inner contamination. The estimators are computed from 1000 replications. The bias and mean squared error over these $1000$ replications are calculated against the target component of the underlying majority distribution. 

The chain algorithm is applied for each pair $(\alpha,\epsilon)$, where the robustness parameter $\alpha$ runs from $0$ to $1$ in steps of $0.2$ and the contamination level $\epsilon$ is $0$, $0.05$ or $0.2$. For each $(\alpha, \epsilon)$ combination, the bridge parameter $\lambda$ runs from $1$ to $0$ in steps of $0.1$ as the chain algorithm proceeds. Since the parameter $\alpha$ is somewhat better understood in the literature and our introduced parameter $\lambda$ needs to be analyzed more deeply, the latter is varied through finer steps than the former.

\subsection{Exponential Scale Model}
Here the model is the class of exponential distributions with mean $\sigma$ over $\sigma\in(0,\infty)$. Data are simulated from the mixture $(1-\epsilon)$Exp$(1)$ + $\epsilon$Unif$(6-10^{-4},6+10^{-4})$, where the first component is exponential with rate 1, and the second is uniform over the indicated range. The contamination level $\epsilon$ is taken to be $0, 0.05$ and $0.2$. For step 2 of the chain algorithm which involves solving the DPD problem completely, $25$ starting points are drawn uniformly from the interval $[0,0.1]$ and the remaining $75$ uniformly from the interval $(0.1,10]$. 
\subsection{Normal Scale Model}
Here the model is the class of normal distributions with mean $5$ and variance $\sigma^2$ over $\sigma\in(0,\infty)$. Data are simulated from the mixture $(1-\epsilon)N(5,1)$ + $\epsilon$Unif$(5-10^{-5},5+10^{-5})$, where the first and the second components are normal and uniform respectively, with parameters as indicated. The contamination level $\epsilon$ is taken to be $0, 0.05$ and $0.2$. For step 2 of the chain algorithm which involves solving the DPD problem completely, $25$ starting points are drawn uniformly from the interval $[0,0.1]$ and the remaining $75$ uniformly from the interval $(0.1,10]$.

For the sake of brevity, we will only include the graphs of the mean-squared errors (scaled by sample size $n = 100$) of the minimum bridge divergence estimators in the exponential scale model case, for contamination levels $0.05$ and $0.2$ (in Figure \ref{mseepzeropttwol}, respectively) in the main article. Tables presenting exact values of the bias and MSE of the minimum bridge divergence estimators for both the exponential and normal scale families, for all the three contamination levels $0, 0.05$ and $0.2$ are given in the online supplement. 

It is observed that for the $5\%$ contamination level, the MSE of the minimum bridge divergence estimators decreases as the chain algorithm proceeds ($\lambda$ goes from 1 to 0) for $\alpha$ upto $0.6$, and increases as the chain algorithm proceeds for $\alpha = 0.8$ and $1$. On the other hand, for the $20\%$ contamination level, the MSE of the minimum bridge divergence estimators decreases as the chain algorithm proceeds from $\lambda = 1$ to $\lambda = 0$ for all values of $\alpha$. As a function of $\alpha$ (with $\lambda$ held fixed), for $5\%$ contamination, the MSE decreases roughly upto $\alpha = 0.6$ and then increases with $\alpha$, whereas for $20\%$ contamination, a completely decreasing trend of the MSE is observed as $\alpha$ increases. 

Note that the values reported here are not, except for the DPD ($\lambda = 1$ case) case, based on the global minimizers of the divergences; rather they are the chain algorithm solutions. What this shows is that for heavy contamination (at the level of 20\%) the LDPD solution (not necessarily the global minimizer) as obtained from the chain algorithm does seem to dominate the other BDPD solutions, and the mean square errors decrease uniformly in the direction $1 \rightarrow 0$ over $\lambda$ for each $\alpha$. This provides partial confirmation of the results of \cite{fe08} and \cite{fuji13}. However the results are now stronger and more useful in that we identify the root for which it works; it does not work for just any root, and certainly not for the global minimizer. This is so in spite of the fact that the contamination considered here is an outer contamination. Any exponential distribution has a mode at zero, and randomly throwing up values close to zero, leading to spurious global minimizers, is not entirely uncommon. 

A similar situation is observed in the normal distribution case. Here the contamination is an inner contamination, which leads to spurious global minimizers more often for the BDPDs. However the chain algorithm guides the process to a sensible root (that is also a local minimizer) in each case. 

\begin{figure}
\centering
\begin{minipage}{.5\textwidth}
  \centering
\resizebox{\textwidth}{!}{%
	\includegraphics{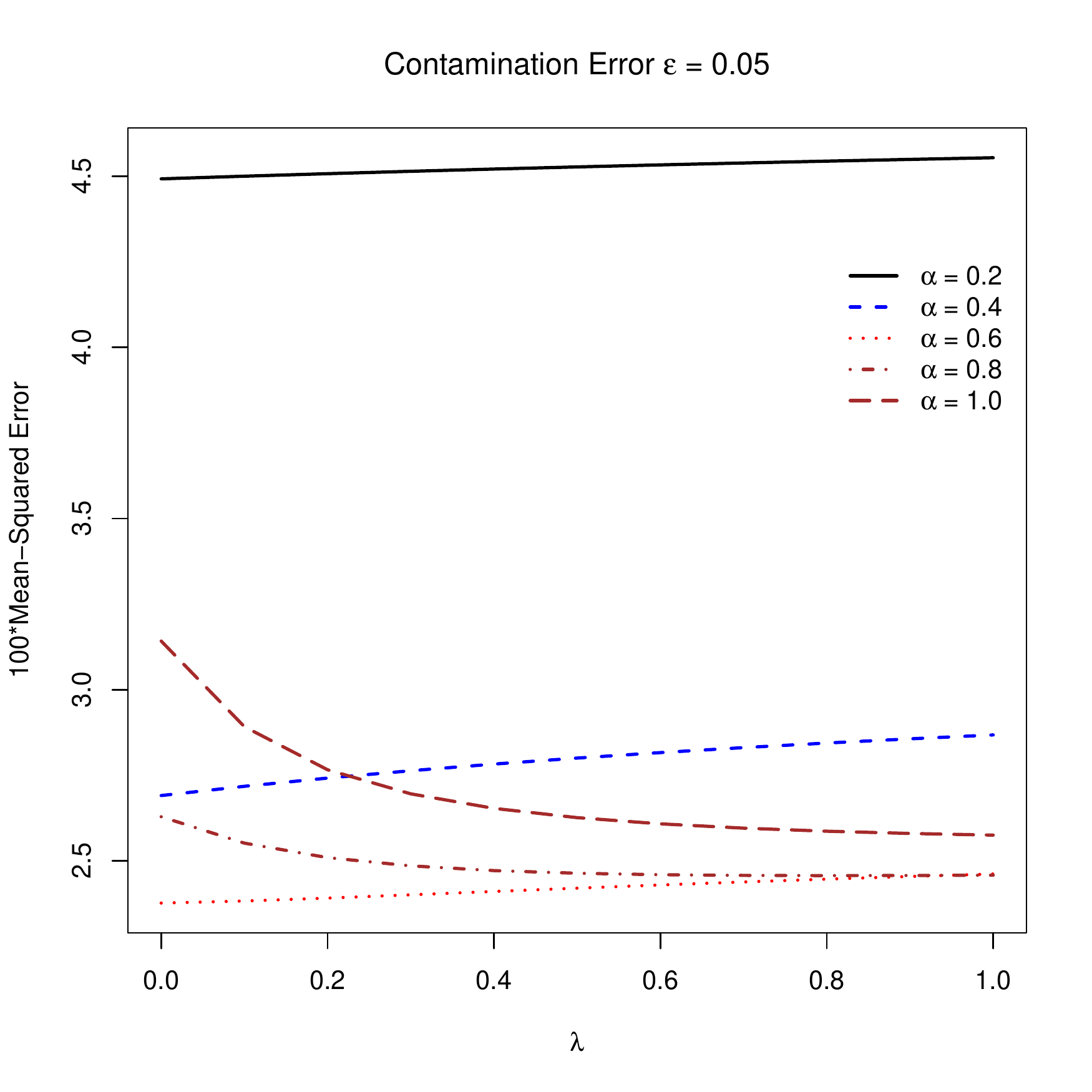}}
\end{minipage}%
\begin{minipage}{.5\textwidth}
  \centering
\resizebox{\textwidth}{!}{%
	\includegraphics{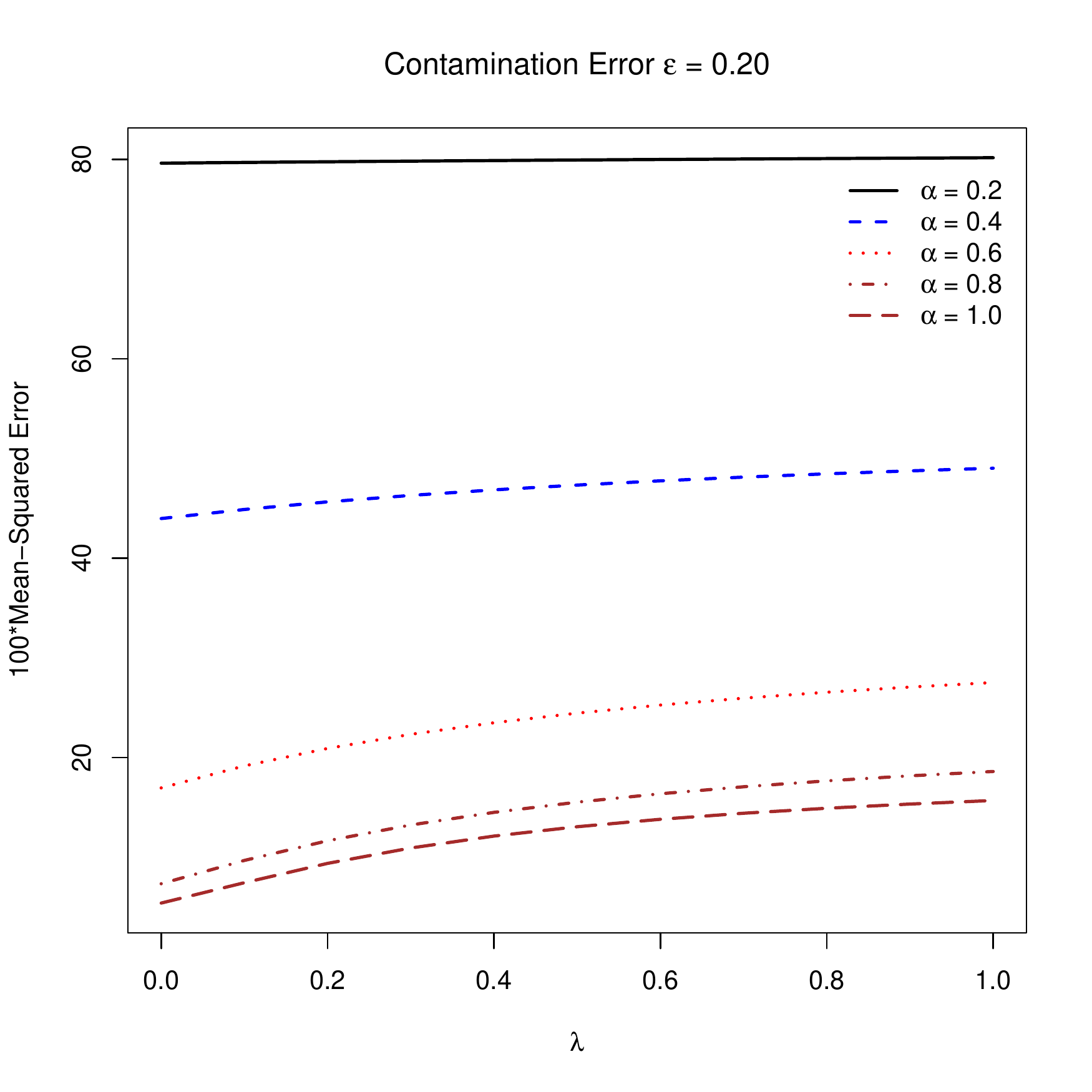}}
\end{minipage}
\caption{Scaled MSE of the Minimum Bridge Divergence Estimators for the $\mbox{Exp}(\sigma)$ model at $5\%$ and $20\%$ Contaminations.}
\label{mseepzeropttwol}
\end{figure}

For the data of size 20 presented in Section \ref{type0s} from $N(0,1)$, the minimum bridge divergence estimators of $\sigma$ under the $N(0, \sigma^2)$ model with $\alpha = 0.8$ and varying $\lambda$ from $0$ to $1$ are presented in Table \ref{tab:BDPDEst}. For each value of $\lambda$, the true global minimizers of the bridge divergence objective functions are presented in the parentheses. It is remarkable that even with $\lambda$ as high as 0.9, the spurious root phenomenon is observed. 
\begin{table}[!h]
\centering
\caption{Bridge Divergence Estimators of $\sigma$ with $\alpha = 0.8$. (The global minimizers are presented in 
the parentheses).}
\label{tab:BDPDEst}
\begin{tabular}{cccccc}\hline
&&&&&\\
$\lambda$ & $\hat{\theta}_{\alpha, \lambda}$ & $\lambda$ & $\hat{\theta}_{\alpha, \lambda}$ & $\lambda$ & $\hat{\theta}_{\alpha, \lambda}$ \\\hline
$0.9$     & 1.245808                         & $0.6$     & 1.248243                         & $0.3$     & 1.252942                         \\
    & (1.4411e-5)                        &     & (1.4124e-5)                         &     & (1.4085e-5)                         \\\hline
$0.8$     & 1.246477                         & $0.5$     & 1.249401                         & $0.2$     & 1.255756                       \\
     & (1.4218e-5)                       &    &(1.4106e-5)                         &      & (1.4078e-5)                         \\\hline
$0.7$     & 1.247269                         & $0.4$     & 1.250926                         & $0.1$     & 1.259926                       \\
   & (1.4155e-5)                         &      &(1.4094e-5)                        &     & (1.4073e-5)                         \\\hline
\end{tabular}
\end{table}

\section{Choice of Tuning Parameters}\label{tunpar}
Most of the robust estimators derive their robustness from down-weighting the observations suspected as outliers. As can be seen from the BDPD objective functions, the observations are proportionally weighted by powers of the model density which demonstrates that the observations are down-weighted for being outliers with respect to the density $f_{\theta}$. It is this outlier stability property which makes the corresponding estimators desirable from the point of view of robustness. If there are no outliers in the data, then the tuning parameter $\alpha$ must ideally be set to zero in order to get optimal inference. But if the data do contain outliers, then the tuning parameters $\alpha$ and $\lambda$ should be chosen so as to partially or fully eliminate the effect of the outlying observations. However the choice of the tuning parameter must be an automatic procedure where a data-driven choice of the parameters $\alpha$ and $\lambda$ is generated by the method. One of the first methods of choosing tuning parameters in case of the DPD was proposed in \cite{HK01}, where the tuning parameter is chosen by minimizing the trace of an estimate of the asymptotic covariance matrix. \cite{wj05} refined this method to find an ``optimal" tuning parameter by minimizing the trace of an estimate of the MSE of the estimator. Here the MSE is computed by separately estimating the bias and the variance components. From the asymptotic variance formula, one can easily estimate the variance. But bias estimation involves the use of a pilot estimate, and the estimated mean square error has a strong dependence on it. \cite{wj05} used the minimum $L_2$ distance estimator as the pilot; however, under the current state of the art, there is no universally acceptable choice of the pilot, on which the process critically depends. 

We acknowledge that the method of \cite{wj05} would be perfect if we could eliminate its dependence on the pilot estimator, or find an independent estimate of the bias as a function of the tuning parameter. However, here we present a modified version of the approach of \cite{HK01} and give an (almost theoretical) justification of the same. Firstly, one should not use the trace of an estimate of the asymptotic covariance matrix (although some of the present authors have done so in the past in the absence of a better strategy) for the construction of the objective function to be minimized. This leads to adding quantities with different units. Consider the $N(\mu,\sigma^2)$ example in which case the asymptotic variance of $(\hat{\mu}, \hat{\sigma}^2)$ or any of the variance estimates have the units of $X_i$ and $X_i^2$ that should not be added. One could, of course, think of estimating $(\mu,\sigma)$ where adding the variance estimates would be more sensible. But this remedy does not work in general parameter spaces. 

Why is the asymptotic variance adequate? An intuitive explanation is as follows. It is well-known that the delete-$d$ jack-knife estimator is a consistent estimator of the asymptotic variance. The delete-$d$ jack-knife estimator is obtained by calculating the difference between the estimator calculated based on $n$ observations and the estimator calculated based on $n-d$ observations. By the definition of a good robust estimator, this difference should be small for a robust estimator but for a non-robust estimator like the maximum likelihood estimator this difference would be large if the $d$ observations deleted contain some outliers. This implies that robust estimators should have ``smaller" asymptotic variance than non-robust estimators. Of course, if there is no contamination then it is known that the maximum likelihood estimator would have ``smaller" variance for large enough sample size. Using a bootstrap asymptotic variance estimator also gives a similar conclusion. We do ignore the bias, but the question clearly goes in favor of a robust estimator. By definition they are closer to the true parameter based on the majority of the data. 

A more concrete explanation is offered by using the closeness measure introduced in \cite{ZH16}. Let $\mathcal{V}_{\Lambda}$ define a set of variance matrices $V_{\alpha}$ indexed by a (possibly vector) parameter $\alpha\in\Lambda$.
\begin{defi}\label{def:Close}
A non-negative real-valued matrix function $\mathfrak{m}(\cdot)$ defined on the set $\mathcal{V}_{\Lambda}$ is called a \mbox{closeness measure} if and only if the following conditions hold.
\begin{enumerate}
\item Consistency. For any $V_1, V_2\in\mathcal{V}_{\Lambda}$, if $V_1 \succeq V_2$, then $\mathfrak{m}(V_1) \ge \mathfrak{m}(V_2)$, where the equality holds only if $V_1 = V_2$.
\item Continuity. For any matrix sequence $\{V_n\}\subseteq \mathcal{V}_{\Lambda}$ with $V_n \succeq B$. As $n\to\infty$, $\norm{V_n - B}_{\infty}\to0$ if and only if $\mathfrak{m}(V_n)\to\mathfrak{m}(B)$.
\end{enumerate}
\end{defi}
Here by $A\succeq B$, we mean $A - B$ is non-negative definite. Definition \ref{def:Close} implies that if there is an efficient variance matrix in the set $\mathcal{V}_{\Lambda}$, then minimizing $\mathfrak{m}(V_{\alpha})$ over $\alpha\in\Lambda$ leads to such a matrix. \cite{ZH16} additionally prove that the trace and the Frobenious norm are valid closeness measures. We use the determinant of the matrix as the measure of closeness (see Section S.5 of the supplementary material). Note that determinant is a valid quantity to consider from the point of view of units and it also has a practical interpretation as the volume of a confidence set. These arguments give a justification for minimizing a closeness measure of an estimate of the asymptotic variance. 

Summing up all these arguments, we claim that minimizing the determinant of an estimate of the asymptotic variance provides an asymptotically optimal estimator whenever such an estimator exists in the family of estimators under consideration. A detailed analysis of this procedure would be taken up in a future paper. Just as a remark, for the normal data of size 20 presented in Section \ref{type0s}, the optimal $\lambda$ parameter with $\alpha = 0.8$ using this procedure turned out to be $\lambda = 1$. A plot of the asymptotic variance over all $\lambda$ is given in the supplementary material.
\section{Concluding Remarks}\label{con}
In this paper, the competing families of divergences, DPD and LDPD, are critically examined for their strengths and deficiencies. The bridge divergence family introduced in this paper is an attempt at combining the good of both and nullify the disadvantages of either. Unlike the DPD, the LDPD estimating equation admits roots with small latent bias. However, these roots may not be the global minimizers of the LDPD objective. The phenomenon of spurious global minimizers of the LDPD is rigorously proved in specific parametric families where the DPD provably works.  The bridge divergences partly suppresses this problem for certain tuning parameter ($\lambda$) values along the bridge.  

However, the Bridge divergence is not a perfect solution in that it also faces the same problem as LDPD in some cases. The point made here is that one cannot expect the global minimizer of the LDPD to generate a good estimator and the DPD estimator is a safe bet in all the cases, but whenever possible some members of the minimum bridge divergence estimators can help reduce the latent bias of the DPD estimator. 
\section*{Acknowledgements}
The authors would like to thank Srijata Samanta of University of Florida for her contribution towards Remark \ref{rem:ss}.
\bibliographystyle{apalike}
\bibliography{bdiver}
\end{document}


\maketitle
\section{Consistency of the Minimum Bridge Divergence Estimator}
In this section, we give a proof of Theorem 3.1 of the main paper. To begin with, let us define two quantities:
\begin{align*}
M_n(\theta) &= \frac{1}{\bar{\lambda}}\log \left(\lambda + \bar{\lambda} \int f_\theta^{1+\alpha}\right) - \frac{1}{\bar{\lambda}}\left(\frac{1+\alpha}{\alpha}\right)\log \left(\lambda + \bar{\lambda} \frac{1}{n}\sum_{i=1}^{n}f_\theta^\alpha(X_i)\right),\\
M(\theta) &= \frac{1}{\bar{\lambda}}\log \left(\lambda + \bar{\lambda} \int f_\theta^{1+\alpha}\right) - \frac{1}{\bar{\lambda}}\left(\frac{1+\alpha}{\alpha}\right)\log \left(\lambda + \bar{\lambda} \int g f_\theta^{\alpha}\right).
\end{align*}
We have the estimator and the target as 
\[
\hat{\theta}_n^{(\alpha,\lambda)} = \argmin_{\theta\in\Theta} M_n(\theta)\quad\mbox{and}\quad
\theta_g^{(\alpha,\lambda)} = \argmin_{\theta\in\Theta} M(\theta).
\]
In order to conclude that $\hat{\theta}_n^{(\alpha,\lambda)} \overset{a.s.}{\to} \theta_g^{(\alpha,\lambda)}$, we apply Theorem 5.7 of \cite{VAAR98}. To this end, we prove almost sure uniform convergence of the (random) functions $M_n(\theta)$ to $M(\theta)$.
Note that
\begin{align*}
\sup_{\theta \in \Theta} |M_n(\theta) - M(\theta)| &= \frac{1}{\bar{\lambda}}\left(\frac{1+\alpha}{\alpha}\right) \sup_{\theta \in \Theta} \left|\log \left(\lambda + \bar{\lambda} \frac{1}{n}\sum_{i=1}^{n}f_\theta^\alpha(X_i)\right) - \log \left(\lambda + \bar{\lambda} \int g f_\theta^{\alpha}\right)\right|\\
&\leq \frac{1}{\lambda}\left(\frac{1+\alpha}{\alpha}\right)  \sup_{\theta \in \Theta} \left|\frac{1}{n}\sum_{i=1}^{n}f_\theta^\alpha(X_i) - \int g f_\theta^{\alpha}\right|.  
\end{align*}
The last inequality follows from the fact that if $\lambda \le a < b < \infty$, then there exists $\xi \in (a,b)$ such that $\log b - \log a = \frac{1}{\xi} (b-a) \leq \frac{1}{\lambda} (b-a)$. By (C2), we know that $\int gf_{\theta}^{\alpha} \le \int g(x)K(x)dx < \infty$ for all $\theta$. Applying Theorem $16$(a) of \cite{Ferguson} with $U(x,\theta) = f_\theta^\alpha(x)$, we conclude that the quantity on the right hand side above converges almost surely to zero. Note that the conditions $(1), (2)$ and $(3)$ of Theorem $16$(a) of \cite{Ferguson} are exactly the same as our conditions (C1),(C3) and (C2), respectively, and we are done.    
\begin{remark}
For the above proof to hold, the assumption $\lambda > 0$ is crucial to bound the terms $\lambda + \bar{\lambda} \frac{1}{n}\sum_{i=1}^{n}f_\theta^\alpha(X_i)$ and $\lambda + \bar{\lambda} \int g f_\theta^{\alpha}$ away from $0$, so that the mean value theorem can be applied on the $\log$ function over its domain $(0,\infty)$. Hence the proof, as it is, cannot be applied on the LDPD, but it works for every other fixed member of the bridge family.  
\end{remark}
\section{Unboundedness of the LDPD Objective based on Simulated Data}\label{sec:SuppSp2}
In Section 5 of the main article, we plotted the LDPD objective with $\alpha = 0.5$ based on an artificial sample of size $4$, the underlying model being the family $\{N(\mu,\sigma^2):\mu \in \mathbb{R},\sigma>0\}$. Here, we demonstrate the unboundedness of the LDPD objective with $\alpha = 0.5$ based on 100 observations simulated from $N(0,1)$. The interactive MATLAB figure can be found in the file \verb|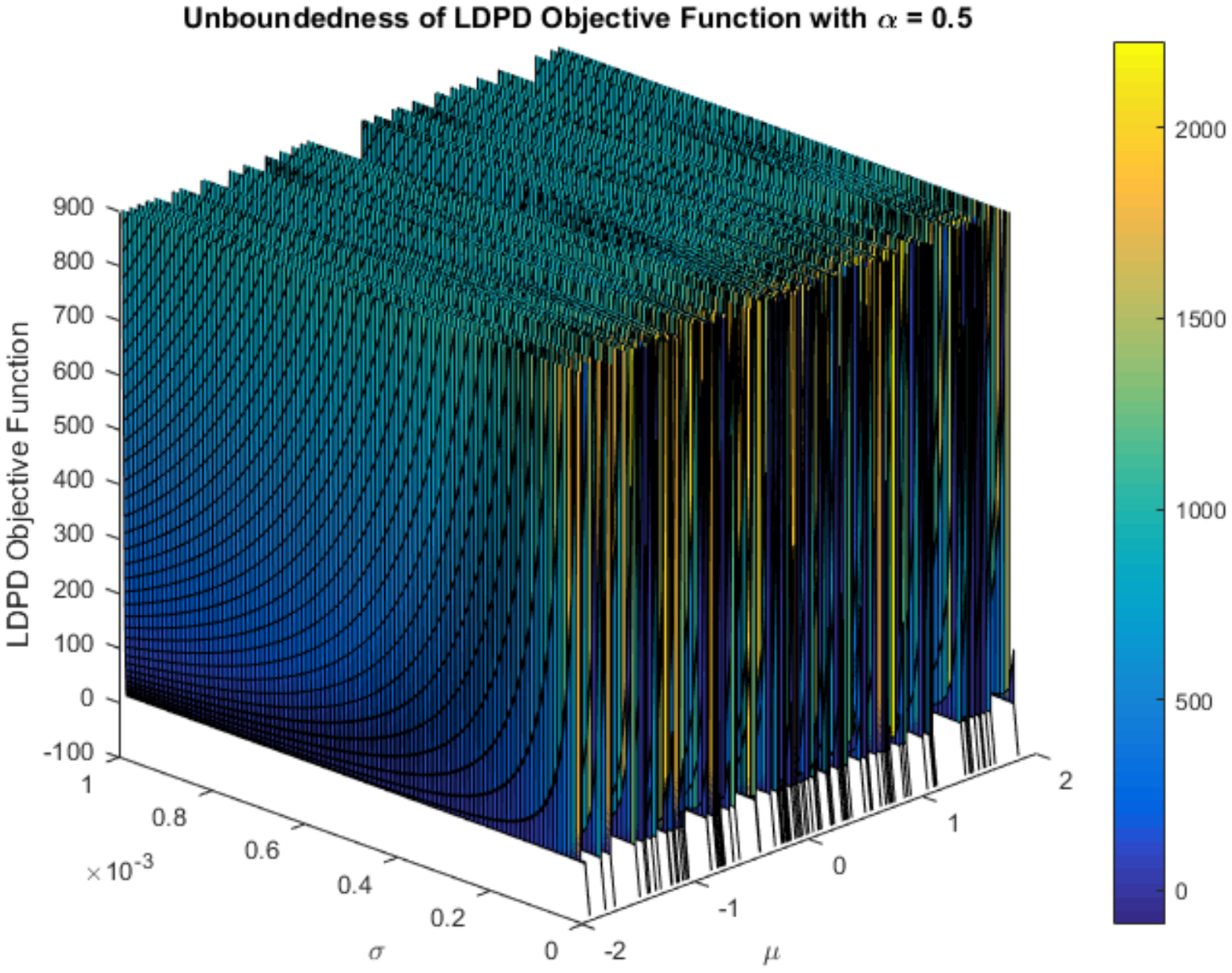| (a part of the online supplement), and the simulated data can be found in \verb|norhundfin.mat| (a part of the online supplement). 
\begin{figure}[H]\centering
	\resizebox{\textwidth}{!}{%
		\includegraphics{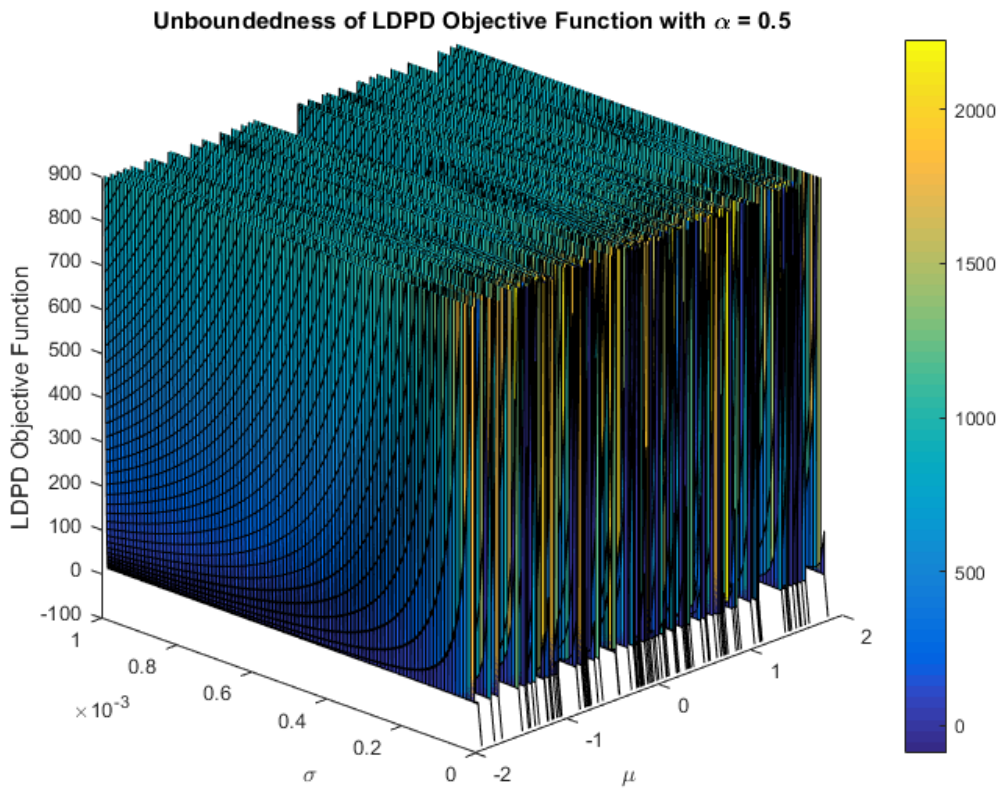}}\vspace{-50mm}
	\caption{LDPD Objective Based on 100 Observations Simulated From $N(0,1)$}
	\label{normalhundred}
\end{figure}  

The figure shows sharp dips of the LDPD objective when the mean parameter is one of the $100$ data points and the standard deviation approaches $0$. From the proof of Theorem 5.1 of the main article, we see that the Bridge divergence objective function goes to $+\infty$ as $\sigma$ goes to $0$ for all those $\mu$ which do not equal any data point, at a fast rate. That is why, the plot looks steep when $\mu$ equals a data point. However, putting $\lambda = 0$ in the same proof, we see that the LDPD objective function goes to $-\infty$ for all those $\mu$ which equal one of the data points, at rate $\log \sigma$, which is a slow rate. That is why, we need a very small starting value in the $\sigma$-axis, to observe the drop in the LDPD objective function.
\section{Bias and Mean-Squared Error of the Bridge Estimators}
In Secion 7 of the main paper, only the plots of the mean-squared errors of the bridge estimators for the exponential scale model case with $5\%$ and $20\%$ contaminations are included. We now present the exact numerical values of the (scaled up) bias and the (scaled up) mean-squared error of the bridge estimators (in Tables 1 to 6) for all the three levels of contamination: $0\% , 5\%$ and $20\%$, for both the exponential and the normal scale family cases. To gain more numerically significant digits, we scaled up the bias and MSE by 10 ($=\sqrt{n}$) and 100 ($=n$), respectively. In each table, along each row, the bias/MSE of the bridge estimators is noted, as the chain algorithm proceeds from $\lambda=1$ to $\lambda=0$ in steps of $0.1$. 

In the normal scale family case with contamination level $0.2$, it is seen that for each $\lambda$, the maximum likelihood estimator (the estimator with $\alpha = 0$) has the smallest mean squared error. The reason behind this is that in this case, we have an inner contamination around the mode of the majority distribution. 

To elaborate, consider a weighted estimating equation of the form:
\begin{equation}\label{besteq}
	\sum_{i=1}^n w_i^{\alpha} u_\theta(X_i) = 0.
\end{equation}
Suppose that $w_i=f_{\eta}(X_i)$, where $f_\theta$ is the density function of $N(5,\theta^2)$, and $\eta$ is an initial value of $\theta$. Suppose that the true value of $\theta$ is $1$. For $\alpha > 0$, $w_i^\alpha$ is large if $X_i$ is close to $5$. So, observations coming from the contaminating distribution get more weight relative to observations from the true majority distribution. Consequently, the solution $$\hat{\theta}^{(\alpha)}= \sqrt{\frac{\sum_{i=1}^n w_i^{\alpha} (X_i-5)^2}{\sum_{i=1}^n w_i^{\alpha}}}$$ to (S.3.1) shrinks more towards $0$ when $\alpha > 0$, than when $\alpha = 0$.Thus, the mean squared error around $1$, i.e. $\mathbb{E}(\hat{\theta}^{(\alpha)} - 1)^2$ is inversely related to shrinkage towards $0$. That is why, we should expect $$\mathbf{MSE}\left(\hat{\theta}^{(\alpha)}\right) \gg \mathbf{MSE}\left(\hat{\theta}^{(0)}\right)$$ for $\alpha > 0$. The estimating equation (S.3.1) is similar to that of the DPD. Similar heuristic argument also applies to the other bridge divergences, as they downweight using powers of densities.

\begin{sidewaystable}[!h]\centering
\captionsetup{width = 0.8\textwidth}
		\caption{Bias of the minimum bridge divergence estimators for $\epsilon = 0$ in the exponential scale family case. The mean squared errors are given in the parentheses.}\label{expbz}
	\begin{tabular}{|c|ccccccccccc|}\hline
		\backslashbox{$\alpha$}{$\lambda$}   & 1.0      & 0.9       & 0.8      & 0.7       & 0.6       & 0.5       & 0.4       & 0.3       & 0.2       & 0.1       & 0.0       \\\hline
		0.0 &-0.0286 & -0.0286 & -0.0286 & -0.0286 & -0.0286 & -0.0286 & -0.0286 & -0.0286 & -0.0286 & -0.0286 & -0.0286 \\
		  & (0.9963) & (0.9963) & (0.9963) & (0.9963) & (0.9963) & (0.9963) & (0.9963) & (0.9963) & (0.9963) & (0.9963) & (0.9963) \\
		0.2 &-0.0132 & -0.0132 & -0.0132 & -0.0132 & -0.0132 & -0.0132 & -0.0132 & -0.0132 & -0.0132 & -0.0131 & -0.0131 \\
	    	& (1.0915) & (1.0918) & (1.0922) & (1.0925) & (1.0929) & (1.0933) & (1.0937) & (1.0942) & (1.0946) & (1.0951) & (1.0956) \\
		0.4 &-0.0050 & -0.0051 & -0.0051 & -0.0052 & -0.0052 & -0.0052 & -0.0052 & -0.0053 & -0.0052 & -0.0052 & -0.0051 \\
		    & (1.2918) & (1.2950) & (1.2984) & (1.3021) & (1.3063) & (1.3108) & (1.3158) & (1.3214) & (1.3277) & (1.3348) & (1.3428) \\
		0.6 & 0.0028  & 0.0026  & 0.0023  & 0.0020  & 0.0017  & 0.0014  & 0.0010  & 0.0008  & 0.0006  & 0.0005  & 0.0007  \\
			& (1.5066) & (1.5155) & (1.5257) & (1.5372) & (1.5505) & (1.5660) & 1.5842 & 1.6060 & 1.6324 & 1.6651 & 1.7068 \\
		0.8 & 0.0113  & 0.0107  & 0.0101  & 0.0094  & 0.0086  & 0.0077  & 0.0068  & 0.0057  & 0.0047  & 0.0041  & 0.0047  \\
			& (1.6968) & (1.7123) & (1.7304) & (1.7517) & (1.7773) & (1.8085) & (1.8473) & (1.8970) & (1.9627) & (2.0535) & (2.1873) \\
		1.0 & 0.0198  & 0.0190  & 0.0180  & 0.0168  & 0.0155  & 0.0138  & 0.0119  & 0.0095  & 0.0068  & 0.0044  & 0.0056\\ 
			& (1.8563) & (1.8770) & (1.9016) & (1.9315) & (1.9686) & (2.0156) & (2.0771) & (2.1609) & (2.2813) & (2.4684) & (2.7981)\\\hline
	\end{tabular}
\end{sidewaystable}

\begin{sidewaystable}\centering
\captionsetup{width = 0.8\textwidth}
	\caption{Bias of the minimum bridge divergence estimators for $\epsilon = 0.05$ in the exponential scale family case. The mean squared errors are given in the parentheses.}\label{expbf}
	\begin{tabular}{|c|ccccccccccc|}\hline
		\backslashbox{$\alpha$}{$\lambda$}   & 1      & 0.9       & 0.8      & 0.7       & 0.6       & 0.5       & 0.4       & 0.3       & 0.2       & 0.1       & 0.0       \\\hline
0.0& 2.5006 & 2.5006 & 2.5006 & 2.5006 & 2.5006 & 2.5006 & 2.5006 & 2.5006 & 2.5006 & 2.5006 & 2.5006 \\
& (8.3365) & (8.3365) & (8.3365) & (8.3365) & (8.3365) & (8.3365) & (8.3365) & (8.3365) & (8.3365) & (8.3365) & (8.3365) \\
0.2& 1.5970 & 1.5954 & 1.5937 & 1.5920 & 1.5901 & 1.5882 & 1.5861 & 1.5840 & 1.5817 & 1.5793 & 1.5768 \\
& (4.5538) & (4.5490) & (4.5439) & (4.5385) & (4.5329) & (4.5270) & (4.5208) & (4.5142) & (4.5072) & (4.4999) & (4.4921) \\
0.4& 0.9389 & 0.9306 & 0.9215 & 0.9117 & 0.9008 & 0.8888 & 0.8756 & 0.8608 & 0.8442 & 0.8256 & 0.8044 \\
& (2.8677) & (2.8564) & (2.8441) & (2.8307) & (2.8162) & (2.8002) & (2.7826) & (2.7632) & (2.7417) & (2.7177) & (2.6907) \\
0.6& 0.6599 & 0.6459 & 0.6301 & 0.6122 & 0.5918 & 0.5682 & 0.5406 & 0.5080 & 0.4690 & 0.4215 & 0.3626 \\
& (2.4616) & (2.4543) & (2.4465) & (2.4382) & (2.4294) & (2.4202) & (2.4105) & (2.4007) & (2.3911) & (2.3826) & (2.3766) \\
0.8& 0.5923 & 0.5769 & 0.5590 & 0.5382 & 0.5134 & 0.4836 & 0.4471 & 0.4015 & 0.3428 & 0.2652 & 0.1585 \\
& (2.4580) & (2.4572) & (2.4569) & (2.4574) & (2.4595) & (2.4637) & (2.4715) & (2.4853) & (2.5093) & (2.5517) & (2.6287) \\
1.0& 0.6000 & 0.5857 & 0.5689 & 0.5487 & 0.5240 & 0.4933 & 0.4540 & 0.4021 & 0.3307 & 0.2272 & 0.0670\\
& (2.5751) & (2.5800) & (2.5865) & (2.5955) & (2.6080) & (2.6261) & (2.6532) & (2.6956) & (2.7659) & (2.8918) & (3.1419)\\\hline
	\end{tabular}
\end{sidewaystable}

\begin{sidewaystable}\centering
\captionsetup{width = 0.8\textwidth}
	\caption{Bias of the minimum bridge divergence estimators for $\epsilon = 0.2$ in the exponential scale family case. The mean squared errors are given in the parentheses.}\label{expbt}
	\begin{tabular}{|c|ccccccccccc|}\hline
		\backslashbox{$\alpha$}{$\lambda$}   & 1      & 0.9       & 0.8      & 0.7       & 0.6       & 0.5       & 0.4       & 0.3       & 0.2       & 0.1       & 0.0       \\\hline
		0.0& 9.9310 & 9.9310 & 9.9310 & 9.9310 & 9.9310 & 9.9310 & 9.9310 & 9.9310 & 9.9310 & 9.9310 & 9.9310 \\
		& (103.1433) & (103.1433) & (103.1433) & (103.1433) & (103.1433) & (103.1433) & (103.1433) & (103.1433) & (103.1433) & (103.1433) & (103.1433) \\
		0.2& 8.5802 & 8.5776 & 8.5748 & 8.5719 & 8.5688 & 8.5655 & 8.5619 & 8.5581 & 8.5541 & 8.5496 & 8.5449 \\
		& (80.1781)  & (80.1383)  & (80.0956)  & (80.0508)  & (80.0029)  & (79.9513)  & (79.8964)  & (79.8376)  & (79.7743)  & (79.7059)  & (79.6319)  \\
		0.4& 6.3748 & 6.3504 & 6.3232 & 6.2927 & 6.2583 & 6.2192 & 6.1744 & 6.1225 & 6.0618 & 5.9898 & 5.9034 \\
		& (49.0325)  & (48.7654)  & (48.4686)  & (48.1369) & (47.7637) & (47.3413)  & (46.8588)  & (46.3030)  & (45.6555)  & (44.8928)  & (43.9815)  \\
		0.6& 4.4240 & 4.3646 & 4.2958 & 4.2153 & 4.1202 & 4.0060 & 3.8666 & 3.6934 & 3.4734 & 3.1869 & 2.8039 \\
		& (27.5293)  & (27.0756)  & (26.5580)  & (25.9626)  & (25.2711)  & (24.4596)  & (23.4958)  & (22.3361)  & (20.9208)  & (19.1671)  & (16.9600)  \\
		0.8& 3.5066 & 3.4379 & 3.3566 & 3.2591 & 3.1402 & 2.9922 & 2.8036 & 2.5565 & 2.2216 & 1.7511 & 1.0688 \\
		& (18.6063)  & (18.1708)  & (17.6661)  & (17.0746)  & (16.3735)  & (15.5316)  & (14.5061)  & (13.2397)  & (11.6579)  & (9.6848)   & (7.3460)   \\
		1.0& 3.2200 & 3.1591 & 3.0861 & 2.9967 & 2.8852 & 2.7422 & 2.5529 & 2.2921 & 1.9147 & 1.3370 & 0.4157\\
		& (15.6965)  & (15.3395)  & (14.9194)  & (14.4182)  & (13.8115)  & (13.0645)  & (12.1278)  & (10.9318)  & (9.3892)   & (7.4499)   & (5.4069) \\\hline
	\end{tabular}
\end{sidewaystable}

\begin{sidewaystable}\centering
\captionsetup{width = 0.8\textwidth}
	\caption{Bias of the minimum bridge divergence estimators for and $\epsilon = 0$ in the normal scale family case. The mean squared errors are given in the parentheses.}\label{normbz}
	\begin{tabular}{|c|ccccccccccc|}\hline
		\backslashbox{$\alpha$}{$\lambda$}   & 1      & 0.9       & 0.8      & 0.7       & 0.6       & 0.5       & 0.4       & 0.3       & 0.2       & 0.1       & 0.0       \\\hline
		0.0& -0.0191 & -0.0191 & -0.0191 & -0.0191 &-0.0191 & -0.0191 & -0.0191 & -0.0191 & -0.0191 & -0.0191 & -0.0191 \\
		& (2.1467) & (2.1467) & (2.1467) & (2.1467) & (2.1467) & (2.1467) & (2.1467) & (2.1467) & (2.1467) & (2.1467) & (2.1467) \\
		0.2& -0.0263 & -0.0263 & -0.0263 & -0.0263 & -0.0263 & -0.0263 & -0.0263 & -0.0263 & -0.0262 & -0.0262 & -0.0262 \\
		& (0.5846) & (0.5846) & (0.5847) & (0.5848) & (0.5849) & (0.5850) & (0.5851) & (0.5852) & (0.5853) & (0.5854) & (0.5855) \\
		0.4& -0.0266 & -0.0267 & -0.0267 & -0.0268 & -0.0269 & -0.0269 & -0.0270 & -0.0271 & -0.0272 & -0.0273 & -0.0274 \\
		& (0.6812) & (0.6818)) & (0.6826) & (0.6834) & (0.6843) & (0.6853) & (0.6864) & (0.6877) & (0.6893) & (0.6910) & (0.6931) \\
		0.6& -0.0294 & -0.0296 & -0.0298 & -0.0301 & -0.0304 & -0.0307 & -0.0311 & -0.0315 & -0.0321 & -0.0327 & -0.0335 \\
		& (0.7944) & (0.7964) & (0.7987) & (0.8013) & (0.8043) & (0.8080) & (0.8124) & (0.8178) & (0.8245) & (0.8332) & (0.8449) \\
		0.8& -0.0309 & -0.0313 & -0.0317 & -0.0323 & -0.0329 & -0.0337 & -0.0346 & -0.0359 & -0.0374 & -0.0395 & -0.0423 \\
		& (0.9041) & (0.9078) & (0.9121) & (0.9172) & (0.9234) & (0.9312) & (0.9410) & (0.9540) & (0.9718) & (0.9978) & (1.0393) \\
		1.0& -0.0302 & -0.0307 & -0.0313 & -0.0321 & -0.0331 & -0.0343 & -0.0358 & -0.0379 & -0.0410 & -0.0455 & -0.0525\\
		& (0.9987) & (1.0037) & (1.0097) & (1.0170) & (1.0262) & (1.0380) & (1.0538) & (1.0760) & (1.1093) & (1.1647) & (1.2741)\\\hline
	\end{tabular}
\end{sidewaystable}

\begin{sidewaystable}\centering
\captionsetup{width = 0.8\textwidth}
	\caption{Bias of the minimum bridge divergence estimators for $\epsilon = 0.05$ in the normal scale family case. The mean squared errors are given in the parentheses.}\label{normbf}
	\begin{tabular}{|c|ccccccccccc|}\hline
		\backslashbox{$\alpha$}{$\lambda$}   & 1      & 0.9       & 0.8      & 0.7       & 0.6       & 0.5       & 0.4       & 0.3       & 0.2       & 0.1       & 0.0       \\\hline
		0.0& -0.5136 & -0.5136 & -0.5136& -0.5136& -0.5136& -0.5136 & -0.5136 & -0.5136 & -0.5136 & -0.5136 & -0.5136 \\
		& (2.3361) & (2.3361) & (2.3361) & (2.3361) & (2.3361) & (2.3361) & (2.3361) & (2.3361) & (2.3361) & (2.3361) & (2.3361) \\
		0.2& -0.3604 & -0.3605 & -0.3606 & -0.3606 & -0.3607 & -0.3608 & -0.3609 & -0.3610 & -0.3611 & -0.3612 & -0.3613 \\
		& (0.7260) & (0.7260) & (0.7262) & (0.7264) & (0.7265) & (0.7267) & (0.7269) & (0.7270) & (0.7272) & (0.7275) & (0.7277) \\
		0.4& -0.4476 & -0.4481 & -0.4486 & -0.4493 & -0.4500 & -0.4508 & -0.4516 & -0.4526 & -0.4537 & -0.4550 & -0.4564 \\
		& (0.9104) & (0.9118) & (0.9132) & (0.9149) & (0.9167) & (0.9187) & (0.9209) & (0.9235) & (0.9264) & (0.9297) & (0.9336) \\
		0.6& -0.5335 & -0.5350 & -0.5366 & -0.5385 & -0.5406 & -0.5432 & -0.5462 & -0.5498 & -0.5542 & -0.5597 & -0.5667 \\
		& (1.1352) & (1.1396) & (1.1448) & (1.1506) & (1.1574) & (1.1654) & (1.1750) & (1.1865) & (1.2008) & (1.2190) & (1.2427) \\
		0.8& -0.6092 & -0.6117 & -0.6147 & -0.6182 & -0.6223 & -0.6274 & -0.6338 & -0.6421 & -0.6530 & -0.6683 & -0.6911 \\
		& (1.3591) & (1.3679) & (1.3783) & (1.3907) & (1.4057) & (1.4242) & (1.4476) & (1.4782) & (1.5198) & (1.5794) & (1.6720) \\
		1.0& -0.6716 & -0.6750 & -0.6790 & -0.6839 & -0.6900 & -0.6977 & -0.7079 & -0.7219 & -0.7422 & -0.7742 & -0.8314\\
		& (1.5545) & (1.5673) & (1.5827) & (1.6016) & (1.6253) & (1.6558) & (1.6967) & (1.7543) & (1.8413) & (1.9851) & (2.2646)\\\hline
	\end{tabular}
\end{sidewaystable}

\begin{sidewaystable}\centering
\captionsetup{width = 0.8\textwidth}
	\caption{Bias of the minimum bridge divergence estimators for $\epsilon = 0.2$ in the normal scale family case. The mean squared errors are given in the parentheses.}\label{normbt}
	\begin{tabular}{|c|ccccccccccc|}\hline
		\backslashbox{$\alpha$}{$\lambda$}   & 1      & 0.9       & 0.8      & 0.7       & 0.6       & 0.5       & 0.4       & 0.3       & 0.2       & 0.1       & 0.0       \\\hline
		0.0& -1.9778 & -1.9778 & -1.9778 & -1.9778 &-1.9778 & -1.9778 & -1.9778 & -1.9778 & -1.9778 & -1.9778 & -1.9778 \\
		& (5.7870)  & (5.7870)  & (5.7870)  & (5.7870)  & (5.7870)  & (5.7870)  & (5.7870)  & (5.7870)  & (5.7870)  & (5.7870)  & (5.7870)  \\
		0.2& -2.7379 & -2.7381 & -2.7385 & -2.7387 & -2.7391 & -2.7394 & -2.7397 & -2.7401 & -2.7404 & -2.7408 & -2.7413 \\
		& (18.1322) & (18.1329) & (18.1339) & (18.1348) & (18.1358) & (18.1368) & (18.1378) & (18.1389) & (18.1401) & (18.1412) & (18.1426) \\
		0.4& -2.3770 & -2.3794 & -2.3821 & -2.3850 & -2.3882 & -2.3918 & -2.3957 & -2.4001 & -2.4050 & -2.4106 & -2.4170 \\
		& (11.0261) & (11.0370) & (11.0486) & (11.0614) & (11.0755) & (11.0910) & (11.1082) & (11.1275) & (11.1493) & (11.1739) & (11.2020) \\
		0.6& -2.2992 & -2.3084 & -2.3187 & -2.3307 & -2.3445 & -2.3608 & -2.3831 & -2.4134 & -2.4445 & -2.4831 & -2.5295 \\
		& (6.8577)  & (6.9119)  & (6.9739)  & (7.0458)  & (7.1303)  & (7.2318) & (7.4016)  & (7.6577)  & (7.8807)  & (8.1785)  & (8.5203)  \\
		0.8& -2.6605 & -2.6814 & -2.7062 & -2.7382 & -2.7941 & -2.8534 & -2.9425 & -3.0508 & -3.2000 & -3.4096 & -3.6874 \\
		& (8.6585)  & (8.8162)  & (9.0067)  & (9.2747)  & (9.8667)  & (10.4664) & (11.4595) & (12.6542) & (14.3434) & (16.7522) & (19.9901) \\
		1.0& -2.9865 & -3.0278 & -3.0800 & -3.1562 & -3.2333 & -3.3812 & -3.5566 & -3.8028 & -4.1377 & -4.6124 & -5.3932\\
		& (10.7030) & (11.1038) & (11.6395) & (12.5220) & (13.3224) & (15.1724) & (17.2957) & (20.3636) & (24.4783) & (30.3887) & (40.3177)\\\hline
	\end{tabular}
\end{sidewaystable} 
%
%
%
%
%
%

\section{Asymptotic Variance Plot of the Bridge Estimators}
The plot of the estimated asymptotic variance of the bridge estimators across $\lambda$ for $\alpha = 0.8$, using the normal data of size $20$ presented in Section 4 of the main article, is given in Figure 2. The quantity plotted is the estimated asymptotic variance using the formula given in Theorem 3.2, evaluated at $\theta=\hat{\theta}^{(\alpha,\lambda)}$ obtained through the chain algorithm for these data. It is seen that the estimated asymptotic variance decreases monotonically with $\lambda$. 

We also plot the estimated asymptotic variance of the bridge estimators across both $\lambda$ and $\alpha$, based on the normal data of size $20$ presented in Section 4 of the main article  (in Figure 3), and based on a sample of size $1000$ simulated from $N(0,1)$ (in Figure 4). The interactive MATLAB figures can be found in the files \verb|twodplotfinal.fig| and \verb|thousand.fig|, and the simulated sample of size $1000$ can be found in \verb|x.mat| (which are parts of the online supplement). The plot in Figure 2 represents the cross section of the plot in Figure 3 along $\alpha = 0.8$. 

\begin{figure}[H]\centering  
\resizebox{\textwidth}{!}{%
	\includegraphics{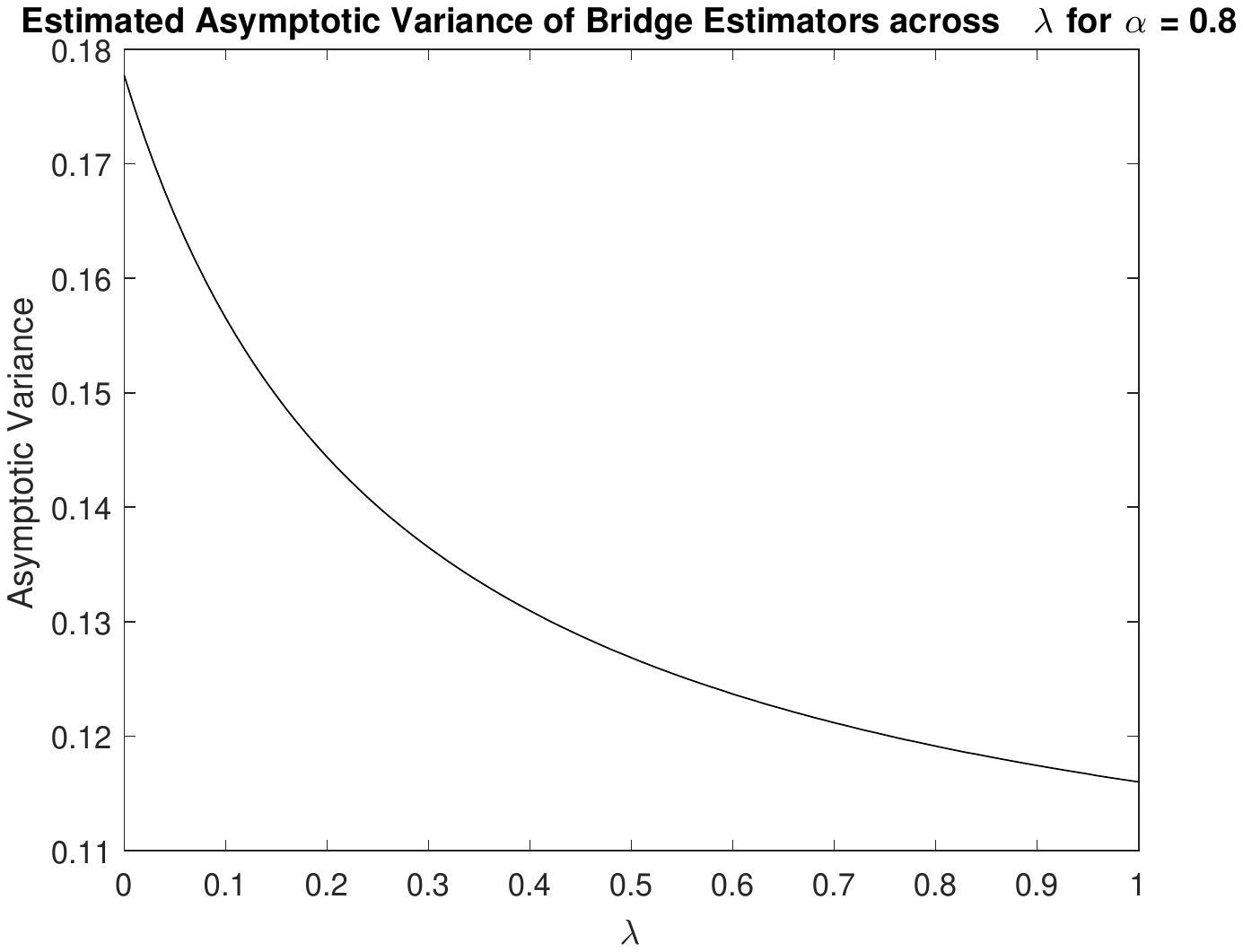}}\vspace{-62mm}
\caption{Plot of Estimated Asymptotic Variance of Bridge Estimators across $\lambda$ for $\alpha=0.8$, based on the normal data of size $20$}
\label{fig:tunone}
\end{figure}


\begin{figure}[H]\centering  
\resizebox{\textwidth}{!}{%
	\includegraphics{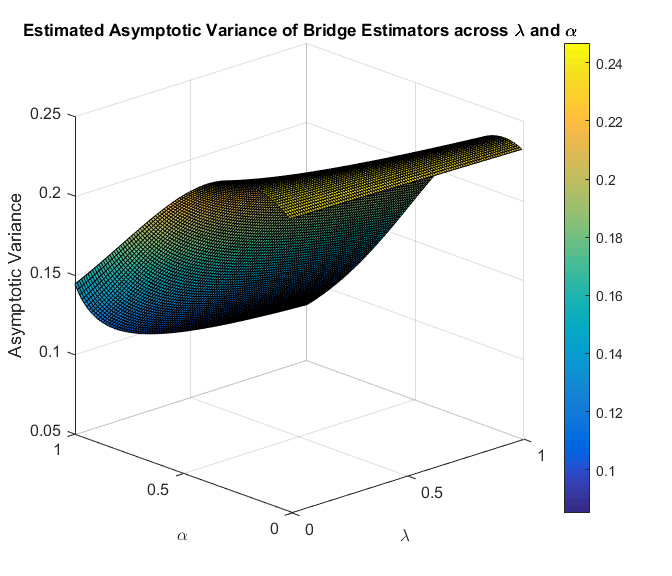}}
\caption{Plot of Estimated Asymptotic Variance of Bridge Estimators across $\lambda$ and $\alpha$, based on the normal data of size $20$}
\label{fig:twenty}
\end{figure}
%
\begin{figure}[H]\centering
\resizebox{\textwidth}{!}{%
	\includegraphics{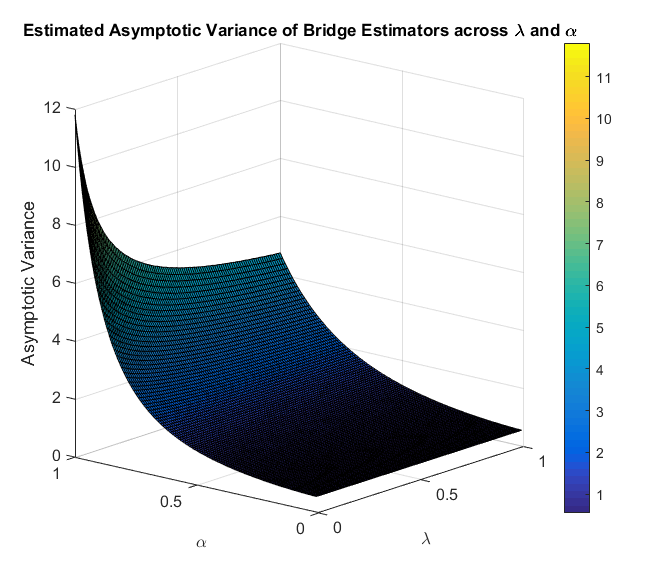}}
\caption{Plot of Estimated Asymptotic Variance of Bridge Estimators across $\lambda$ and $\alpha$, based on the simulated normal data of size $1000$}
\label{fig:thous}
\end{figure}

\section{The Determinant is a closeness measure}
\begin{thm}\label{thm:close}
The determinant of a matrix is a closeness measure.
\end{thm}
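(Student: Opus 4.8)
The statement is to be read on the cone of symmetric positive semidefinite $p\times p$ matrices, which is the natural domain here since a closeness measure is applied to asymptotic covariance matrices (cf.\ Theorem~3.2), and the determinant of a general matrix need not even be nonnegative. Working from the definition in the main paper, a closeness measure is (at minimum) required to be a nonnegative, continuous functional on this cone that is monotone in the Loewner order. The plan is to verify these in turn for $A\mapsto\det(A)$. Nonnegativity and continuity are immediate: on the positive semidefinite cone $\det(A)=\prod_{i=1}^{p}\lambda_i(A)\ge 0$, and $\det$ is a polynomial in the entries of $A$, hence continuous, which also disposes of any normalization or limiting clause in the definition. So the real content is the monotonicity requirement: whenever $0\preceq A\preceq B$, one must show $\det(A)\le\det(B)$.

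For the monotonicity step I would first reduce to positive definite matrices by perturbation: since $A+\varepsilon I\preceq B+\varepsilon I$ for every $\varepsilon>0$, it suffices to prove the inequality in the positive definite case and then let $\varepsilon\downarrow 0$, using continuity of $\det$. When $B$ is positive definite it has a positive definite inverse square root $B^{-1/2}$, and $A\preceq B$ is equivalent to $C:=B^{-1/2}AB^{-1/2}\preceq I$; hence every eigenvalue of the symmetric matrix $C$ lies in $[0,1]$, so $\det(C)=\prod_i\lambda_i(C)\le 1$, and since $\det(C)=\det(A)/\det(B)$ we conclude $\det(A)\le\det(B)$. An alternative route is Minkowski's determinant inequality applied to $B=A+(B-A)$ with $B-A\succeq 0$; I would use whichever is cleaner given what has already been set up in the paper.

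The main obstacle is precisely this monotonicity inequality. Unlike the trace or the largest eigenvalue, the determinant is a \emph{product} of eigenvalues, so it does not follow from an entrywise comparison or a one-line trace bound and genuinely requires the simultaneous-congruence reduction above (equivalently, an interlacing argument for the eigenvalues of $A$ and $B$ under $A\preceq B$). A secondary point to settle is how the boundary of the cone enters the definition of a closeness measure: if only strictly positive definite matrices are admitted, the $\varepsilon$-perturbation is unnecessary and one in fact obtains strict monotonicity ($A\prec B\Rightarrow\det(A)<\det(B)$, since then $C\prec I$); if the full positive semidefinite cone is admitted, the limiting argument of the previous paragraph is what closes the gap, and one should flag that strictness can fail on the boundary, as two distinct singular matrices can both have determinant $0$.
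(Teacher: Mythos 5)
Your monotonicity argument is essentially the paper's own device: passing to $C=B^{-1/2}AB^{-1/2}\preceq I$ is the same congruence trick as the paper's identity $|A|=|B|\prod_{i=1}^p(1+\lambda_i)$, with $\lambda_i\ge 0$ the eigenvalues of $B^{-1/2}(A-B)B^{-1/2}$, and it correctly yields $|A|\ge |B|$ whenever $A\succeq B$ with $B$ positive definite. The difficulty is that you had to guess the definition of a closeness measure, and the definition actually used in the paper has two clauses that your guessed version (nonnegative, continuous, Loewner-monotone) does not capture. On the consistency clause, what is required is that $A\succeq B$ and $|A|=|B|$ force $A=B$; you only assert strict inequality under the strict order $A\prec B$, which is weaker (take $A=\mathrm{diag}(2,1)$, $B=I$: here $A\succeq B$, $A\neq B$, but $A\not\succ B$). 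The needed statement does follow from your own setup --- $\det C=1$ with all eigenvalues of $C$ in $[0,1]$ forces $C=I$, i.e.\ $A=B$ --- but you did not draw that conclusion; and your correct remark that definiteness fails on the boundary of the cone is precisely why the paper restricts to positive definite matrices rather than the full positive semidefinite cone.

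The more serious gap is the continuity clause. In the paper it is a two-way implication under the constraint $A_n\succeq B$: not only does $\|A_n-B\|_\infty\to 0$ give $|A_n|\to|B|$, but conversely $|A_n|\to|B|$ must imply $\|A_n-B\|_\infty\to 0$. You dismiss this with ``$\det$ is a polynomial, hence continuous,'' but polynomial continuity only gives the trivial direction; the substantive property --- that convergence of the determinant values certifies convergence of the matrices, which is the whole point of calling the determinant a closeness measure --- is missing from your proposal. The paper proves it from the same eigenvalue identity: with $\lambda_{in}\ge 0$ the eigenvalues of $B^{-1/2}(A_n-B)B^{-1/2}$, one has $|A_n|/|B|=\prod_{i=1}^p(1+\lambda_{in})\to 1$, and since this product dominates $1+\max_i\lambda_{in}$, all $\lambda_{in}\to 0$, whence $A_n\to B$. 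Without this converse direction (or some substitute for it), your argument does not establish the theorem as the paper states it.
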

\begin{proof} Before getting into the proof, note that for any two positive definite matrices $A$ and $B$ of order $p\times p$ with a positive semi-definite difference $A - B$, we have
\[
|A| = |A - B + B| = |B|\left|I + B^{-1/2}(A - B)B^{-1/2}\right| = |B|\prod_{i=1}^p\left(1 + \lambda_i\right),
\]
where $\lambda_i \ge 0$ for all $1\le i\le p$ represents the eigenvalues of $B^{-1/2}(A - B)B^{-1/2}.$
 
(Consistency). The equality above implies that $|A| \ge |B|$ for $A\ge B$. To show that equality $|A| = |B|$ for $A\ge B$ holds if and only if $A = B$, first note that $A = B$ will trivially imply that $|A| = |B|$. If $|A| = |B|$, then by the equality above, we get $\prod_{i=1}^p(1 + \lambda_i) = 1$ which implies that $\lambda_i = 0$ for all $1\le i\le p$.

(Continuity). To show that $\|A_n - B\|_{\infty} \to 0$ if and only if $|A_n| \to |B|$ as $n\to\infty$ under the assumption that $A_n \ge B$ for all $n$, first note that $\|A_n - B\|_{\infty}$ converging to zero as $n\to\infty$ implies that $|A_n| - |B|\to0.$ This is because, the determinant is a continuous function of the elements of the matrix. So, it is now enough to prove the opposite direction. For proving this, define $\lambda_{in}, 1\le i\le p$ as the eigenvalues of $B^{-1/2}(A_n - B)B^{-1/2}$. Then by the equality above, we get that as $n\to\infty$,
\[
\prod_{i=1}^p (1 + \lambda_{in}) \to 1.
\]
This ensures that $\lambda_{in}\to 0$ as $n\to\infty$ for all $1\le i\le p$. Therefore $\|A_n - B\|_{\infty}$ converges to zero as $n\to\infty$.
\end{proof}

\bibliographystyle{apalike}
\bibliography{bdiver}